\newcommand{\codeand}{\mathbin{\&}} 
\newcommand{\codexor}{\boxplus} 
\newcommand{\codeplus}{\codexor} 
\newcommand{\wt}{\mathop{\text{\rm wt}}\nolimits}
\newcommand{\Aut}{\mbox{Aut}}
\newcommand{\cA}{\mathcal{A}}
\newcommand{\bF}{\relax\leavevmode\hbox{$F$\kern-.55em
                  \vrule height1.95ex depth-1.8ex width5.75pt}\kern1pt}
\newcommand{\ddt}{\partial_\tau}
\newcommand{\sM}{{\mathscr{M}}}
\newcommand{\Spin}{\mathop{\text{Spin}}\nolimits}
\newcommand{\fT}{\text{\sf T}}
\newtheorem{theorem}{Theorem}[section]
\newtheorem{definition}{Definition}[section]
\newtheorem{proposition}[theorem]{Proposition}
\newtheorem{corollary}[theorem]{Corollary}
\long\def\oMit#1{}
\def\4{\hphantom{\hbox{$-$}}}
\def\8{\oplus}
\def\stk#1{{\tiny\shortstack{#1}}}
\def\vC#1{\vcenter{\hbox{\hss#1\hss}}}
\definecolor{Hey}{rgb}{1,0,.4}
\definecolor{orange}{rgb}{.8,.4,0}
\definecolor{plum}{rgb}{.4,0,.6}
\definecolor{bone}{rgb}{.95,.95,.9}
\definecolor{gray}{gray}{.25}
\def\Ic{\hbox{\sf c\kern-3pt\rule[.2pt]{.5pt}{4pt}\kern2.5pt}}
\newcounter{xmpl}\resetby{section}{xmpl}
\newenvironment{example}{\par\noindent\addtocounter{xmpl}{1}%
                          \def\@currentlabel{{\thesection.\arabic{xmpl}}}%
                          {\bfseries Example~\arabic{section}.\arabic{xmpl}}%
                           ~\ignorespaces\small}
                        {{\color{gray}\hrulefill\rule{.45pt}{1ex}}\par}
\def\Ft#1{\footnote{#1}}
\newdimen\parshift\parshift=\parindent
 \long\def\@footnotetext#1{\insert\footins{\reset@font\footnotesize\interlinepenalty%
  \interfootnotelinepenalty\splittopskip\footnotesep\splitmaxdepth\dp\strutbox%
   \floatingpenalty\@MM\hsize\columnwidth\addtolength{\hsize}{-2\parshift}
    \@parboxrestore\protected@edef\@currentlabel{\csname p@footnote\endcsname\@thefnmark}
      \color@begingroup
       \@makefntext{\rule\z@\footnotesep\ignorespaces#1\@finalstrut\strutbox\vglue1mm}
        \color@endgroup}}
 \long\def\@makefntext#1{\hglue\parshift
                         \vbox{\noindent\hb@xt@0em{\hss\@makefnmark\,}#1}\vglue2ex}
 \font\rOpe=cmsy10                        
 \def\ktl{{\hbox{\rOpe\char'170}}}        
 \def\kbl{{\hbox{\rOpe\char'170}}}        
 \def\kcr{{\reflectbox{\rOpe\char'170}}}        
 \def\ktr{{\reflectbox{\rOpe\char'170}}}        
 \def\kbr{{\reflectbox{\rOpe\char'170}}}        
 \def\Border{\vbox{\hsize0pt
        \setlength{\unitlength}{1mm}
        \newcount\xco
        \newcount\yco
        \xco=-21
        \yco=12
        \begin{picture}(0,0)(-7.5,0)
        \put(\xco,\yco){$\ktl$}
        \advance\yco by-1
        {\loop
        \put(\xco,\yco){$\kcr$}
        \advance\yco by-2
        \ifnum\yco>-240
        \repeat
        \put(\xco,\yco){$\kbl$}}
        \xco=170
        \yco=12
        \put(\xco,\yco){$\ktr$}
        \advance\yco by-1
        {\loop
        \put(\xco,\yco){$\kcr$}
        \advance\yco by-2
        \ifnum\yco>-240
        \repeat
        \put(\xco,\yco){$\kbr$}}
        \put(-19.5,13){\scalebox{.598}{State University of New York
            Physics Department|University of Maryland Center for
            String and Particle  Theory \&\ Physics Department|%
            Delaware State University DAMTP}}
        \put(-19.5,-241.5){\scalebox{.728}{University of Washington
            Mathematics Department|Pepperdine University Natural
            Sciences Division|Bard College Mathematics
            Department}}
        \end{picture}
        \par\vskip-8mm}}
\definecolor{UMred}{rgb}{.9,.05,.2}
 \def\UMbanner{\vbox{\hsize0pt
        \setlength{\unitlength}{.4mm}
        \thicklines
        \begin{picture}(0,0)(-30,-10)
        \put(165,16){\line(1,0){4}}
        \put(170,16){\line(1,0){4}}
        \put(180,16){\line(1,0){4}}
        \put(175,0){\line(1,0){4}}
        \put(180,0){\line(1,0){4}}
        \put(185,0){\line(1,0){4}}
        \put(169,0){\line(0,1){16}}
        \put(170,0){\line(0,1){16}}
        \put(179,0){\line(0,1){16}}
        \put(180,0){\line(0,1){16}}
        \put(184,0){\line(0,1){16}}
        \put(185,0){\line(0,1){16}}
        \put(169,16){\oval(8,32)[bl]}
        \put(170,16){\oval(8,32)[br]}
        \put(179,0){\oval(8,32)[tl]}
        \put(185,0){\oval(8,32)[tr]}
        \end{picture}
        \par\vskip-6.5mm
        \thicklines}}
\begin{document}
\thispagestyle{empty}
\vbox{\Border\UMbanner}
 \noindent
 \today \hfill{UMDEPP 08-010, SUNY-O/667}
  \vfill
 \begin{center}
{\LARGE\sf\bfseries\boldmath
  Topology Types of Adinkras and the Corresponding\\[3mm]
  Representations of $N$-Extended Supersymmetry}\\[3mm]
  \vfill
{\sf\bfseries C.F.\,Doran$^a$, M.G.\,Faux$^b$, S.J.\,Gates, Jr.$^c$,
     T.\,H\"{u}bsch$^d$,\\ K.M.\,Iga$^e$, G.D.\,Landweber$^f$, and R.L.\,Miller$^g$}\\[3mm]
{\small\it
  \leavevmode\hbox to\hsize{\hss
  $^a$Department of Mathematics,
      University of Washington, Seattle, WA 98105%
  , {\tt  doran@math.washington.edu}\hss}
  \\
  $^b$Department of Physics,
      State University of New York, Oneonta, NY 13825%
  , {\tt  fauxmg@oneonta.edu}
  \\
  $^c$Center for String and Particle Theory,\\[-1mm]
      Department of Physics, University of Maryland, College Park, MD 20472%
  , {\tt  gatess@wam.umd.edu}
  \\
  $^d$Department of Physics \&\ Astronomy,
      Howard University, Washington, DC 20059%
  , {\tt  thubsch@howard.edu}\\[-1mm]
  \leavevmode\hbox to\hsize{\hss
  DAMTP, Delaware State University, Dover, DE 19901%
  , {\tt  thubsch@desu.edu}\hss}
  \\
  $^e$Natural Science Division,
      Pepperdine University, Malibu, CA 90263%
  , {\tt  Kevin.Iga@pepperdine.edu}
  \\
 $^f$Mathematics Program, Bard College,
     Annandale-on-Hudson, NY 12504-5000%
  , {\tt  gregland@bard.edu}
  \\
 $^g$Department of Mathematics,\\[-1mm]
      University of Washington, Seattle, WA 98105%
  , {\tt  rlmill@math.washington.edu}
 }\\[3mm]
  \vfill
{\sf\bfseries ABSTRACT}\\[3mm]
\parbox{145mm}{We present further progress toward a complete classification scheme for describing supermultiplets of $N$-extended worldline supersymmetry, which relies on graph-theoretic topological invariants. In particular, we demonstrate a relationship between Adinkra diagrams and quotients of $N$-dimensional cubes, where the quotient groups are subgroups of $(\mathbb{Z}_2)^N$. We explain how these quotient groups correspond precisely to doubly even binary linear error-correcting codes, so that the classification of such codes provides a means for describing equivalence classes of Adinkras and therefore supermultiplets.  Using results from coding theory we exhibit the enumeration of these equivalence classes for all cases up to 26 supercharges, as well as the maximal codes, corresponding to minimal supermultiplets, for up to 32 supercharges.
 }
  
\end{center}
  \vfill\vfill

\clearpage
\setcounter{page}{0}
\pagenumbering{arabic}
\section{Introduction, Review and Synopsis}
 \label{IRS}
In the past decade there has been a renewed interest among physicists in certain basic questions in supersymmetry and, in particular, the classification of supersymmetric theories.  Dualities and advances in string theory and its M- and F-Theory extensions have raised an awareness of a range of supersymmetric theories that were not previously well-studied. In many cases, dimensional reduction has been a useful---and sometimes the only---way to analyze these theories. In particular, dimensional reduction to one dimension (time) is often particularly enlightening. This reduction may, in fact, retain all the information necessary to reconstruct its higher-dimensional preimage, by the reverse process of dimensional ``oxidization''\cite{rGR0}. Independently, $N$-extended worldline supersymmetry also governs the dynamics of wave functionals in supersymmetric quantum field theories and so applies to all of them also in this other, more fundamental way.

In addition, the past decade has seen a renewed interest by mathematicians in physical supersymmetry, in part due to the corresponding interest by physicists, and in part because of new induced advances in algebraic geometry\cite{rMMYau1,rMMYau2,rMMYau3,rMS} and four-dimensional topology\cite{rDon,rTQFT,rSW} (see Ref.\cite{rKISurv} for a survey accessible to physicists).  This has brought about a desire to approach the foundations of the subject more systematically.

\subsection{Adinkras}
Studies of one-dimensional supersymmetric theories have led to the development of the ${\cal GR}(d,N)$ algebras\cite{rGR1,rGR2,rGLP}, to the {\em\/Adinkras\/}\cite{rA}, and to other efforts\cite{rFKS,rFS1,rFS2,rKRT,rBKMO,rBG,rBKLS}. The main goal of these works is a comprehensive, constructive and {\em\/conveniently usable\/} classification of representations of supersymmetry. Herein, we focus on supersymmetry with no central charge, and examine those representations which may be described using Adinkras.

Adinkras are directed graphs with various colorings and other markings on vertices and edges, which in a pictorial way encode all details of the supersymmetry transformations on the component fields within a supermultiplet\cite{rA}. They can be related to superspace constructions\cite{r6-1,r6-2}, aid the understanding of complex systems of transformation rules\cite{r6-3a,r6-4}, and provide a more systematic and complete classification tool for representations of supersymmetry.

The study of Adinkras, then, naturally leads to the question of their classification.\footnote{It must be cautioned that there are two ways in which a classification of Adinkras falls short of classifying $D=1$ off-shell supermultiplets: first, not every such supermultiplet can be described by an Adinkra, and second, sometimes such a supermultiplet can be described by more than one Adinkra.  Nevertheless, the classification of Adinkras is an important step in such a program.}  This classification of Adinkras naturally falls into four steps:

\begin{enumerate}
\item Determine which topologies are possible (the topology of an Adinkra is the underlying graph of vertices and edges without colorings, as, for instance, in Ref.\cite{r6-1}).
\item Determine the ways in which vertices and edges may be colored.  The topology of the Adinkra, together with the colorings of vertices and edges, will be called the {\em chromotopology} of the Adinkra.  It is chromotopologies that are classified in this paper.
\item Determine the ways in which edges may be chosen as dashed or solid.  This is closely related to the well-known Clifford algebra theory, and will be studied in a future effort.
\item Determine the ways in which arrows may be directed along each edge.  This issue is addressed in Ref.\cite{r6-1}, and shown to be equivalent to the question of ``hanging'' the graph on a few sinks.  Alternately, we can start with an Adinkra where all arrows go from bosons to fermions, then perform a sequence of vertex raises to arrive at other choices of arrow directions.
\end{enumerate}

As it happens, it is convenient to do 1.~and 2.~together; that is, to classify chromotopologies.  Herein, we show that the classification of Adinkra chromotopologies is equivalent to another interesting question from coding theory: the classification of doubly even binary linear codes.  Much work has already been done in this area\cite{rMcWS,rCHVP,rCPS}, and the work described in this paper goes even further in developing this classification; see Appendix~\ref{app:RM}.

We emphasize that we focus here on the supersymmetric representation theory, not the dynamics of supersymmetric theories. This is logically necessary in any classification endeavor, as we need to first know the full palette of supersymmetric representations before discussing the properties of the dynamics in theories built upon such representations. Clearly, presupposing a standard, uncoupled Lagrangian for the supermultiplets that we intend to classify would necessarily limit the possibilities; there do exist supermultiplets which can only have interactive Lagrangians\cite{rHSS,rGSS}. Herein, we defer the task of finding Lagrangians involving the supermultiplets considered in this paper. In Refs.\cite{r6-2,r6-7a}, we have in fact started on such studies, and, using Adinkras, have constructed supersymmetric Lagrangians for some of the supermultiplets that are also discussed herein.

In units where $\hbar=1=c$, all physical quantities may have at most units of mass, the exponent of which is called the {\em\/engineering dimension\/} and is an essential element of physics analysis in general. The engineering dimension of a field $\f(\t)$ will be written $[\f]$; for more details, see Refs.\cite{r6-1,r6--1}.

\subsection{Main Result}
 \label{sMR}
Our main result about the chromotopology types of Adinkras and the corresponding supermultiplets, up to direct sums, may be summarized as follows:

We define the function:
\begin{equation}
 \vk(N):=
  \begin{cases}
   0 &\text{for $N<4$},\\
   1 &\text{for $N=4,5$}, \\
   2 &\text{for $N=6$}, \\
   3 &\text{for $N=7$}, \\
   4 + \vk(N{-}8) &\text{for $N\geq 8$, recursively}.
  \end{cases}
 \label{eKmax}
\end{equation}
\begin{enumerate}\itemsep=-3pt\vspace{-3mm}
 \item Every Adinkra can be separated into its connected components.  (The supermultiplet corresponding to such an Adinkra breaks up into a direct sum of other supermultiplets, each of which corresponds to one of the connected components of the Adinkra).
 \item Each such component has the topology of a $k$-fold ($0\leq k\leq\vk\6(N)$) iterated $\ZZ_2$-quotient of the $N$-cube.  (The corresponding supermultiplet is likewise a quotient of the supermultiplet corresponding to the $N$-cube).
 \begin{enumerate}\itemsep=-3pt
  \item The result of iterating these $\ZZ_2$-quotients is equivalent to quotienting by a subgroup of $(\ZZ_2)^N$, that is, by a binary linear $[N,k]$ code.
  \item A code can be used for such a quotient if and only if it is a doubly even code.
  \item There is a one-to-one correspondence between chromotopologies of Adinkras and doubly even codes, under which column-permutation equivalence in codes corresponds to swapping colors on the edges of the Adinkra, and $R$-symmetries of the supermultiplet.
  \item The number of distinct doubly even codes grows combinatorially with $N$ and $k$, and distinct codes produce distinct Adinkra chromotopologies.
  \item \label{l:Ineq}A large subset of distinct doubly even codes are also permutation-inequivalent; two permutation-inequivalent codes produce distinct Adinkra topologies. This occurs for $N\geq8$, but for $N\geq10$ this occurs even when $k=\vk(N)$, giving rise to topologically distinct {\em\/maximal quotients\/}, \ie, {\em\/minimal supermultiplets\/}. The number of these also grows combinatorially with $N,k$; see Table~\ref{t:G3}.
 \end{enumerate}
\end{enumerate}\vspace{-3mm}

It is possible to peruse these results backwards: One may start with the minimal supermultiplets in the entry~(\ref{l:Ineq}) above, and then reconstruct the non-minimal ones by reversing the quotienting procedure described herein. We will not explore the combinatorially complex mechanics of such procedures here, but it reflects the very large numbers of $[N,k]$-codes in Table~\ref{t:G3}.\ping

The paper is organized as follows:
 Section~\ref{s:C} is a brief introduction to codes, and Section~\ref{adinkras} provides a review of Adinkras and their relationship with supermultiplets.  The main material is in Section~\ref{AdiTop}, where Adinkra topologies are classified in terms of doubly even codes, which we explore further in Section~\ref{s:codes-redux}. 
  We then turn to discuss the subtleties in identifying Adinkras with supermultiplets, which lead to the second part of our main result presented in full detail in Section~\ref{CliffHanger}.

 A number of technical details are deferred to the Appendices.
 Details of the supersymmetry action in real supermultiplets are discussed in Appendix~\ref{app:SA}, while those of the complex case occupy Appendix~\ref{app:complex}.
 Finally, Appendix~\ref{app:RM} presents the necessary details of the classification of doubly even binary linear error-correcting codes.

\section{Codes}
 \label{s:C}
We begin with a brief introduction to the theory of codes so as to make it easier to spot the emergence of its elements in the subsequent sections. The Reader who wants a more thorough introduction to coding theory can consult a standard reference such as Refs.\cite{rMcWS,rCHVP,rCPS}.

Generally, a code is a set of strings of characters, each of which is given some kind of meaning in order to communicate between two people (say).  We do not assume these are secret codes, in that we are not necessarily trying to hide this communication from a third party.  Given a fixed alphabet, there are many possible strings of letters from this alphabet, but a code will specify only some of these as codewords.  In this paper we do not assign meaning to the codewords, and thus we consider a code simply to be a set of codewords.

In this paper, we will focus on {\em\/binary block codes}.  In this case, the available characters from which the codewords are composed are the binary digits $0$ and $1$, and we fix a positive integer $N$, and require that each codeword have length $N$.  Thus, a binary block code of length $N$ is a subset of $\{0,1\}^N$.  Though the standard notation for an element of a cartesian product is $(x_1,x_2,\cdots,x_N)$, in practice we frequently abandon the parentheses and the commas, so that the element $(0,1,1,0,1)$ may be written more succinctly as the codeword $01101$.  The components of such an $N$-tuple are called bits, and the $N$-tuple is called a binary number.

If we think of $\{0,1\}$ as the group $\ZZ_2$, and $\{0,1\}^N$ as the $N$-fold product of this group with itself, then a binary block code is called {\em\/linear} when it is a subgroup of $\{0,1\}^N$.  This is equivalent to the statement that if $v$ and $w$ are codewords, then $v\codexor w$ (their bitwise sum modulo $2$, or equivalently their exclusive or) is also a codeword.  For instance, if $N=3$, then $\{010,011\}$ is a binary block code of length 3, but it is not linear because the linear combination, $010\codexor 011=001$, is itself not one of the listed codewords.  On the other hand, $\{000,001,010,011\}$ is a linear code.

Now, $\ZZ_2$ is not only a group; it is also a field, so $\{0,1\}^N$ can be viewed as a vector space over $\{0,1\}$.  All the concepts of linear algebra then apply, but with $\mathbb{R}$ replaced with $\ZZ_2$.  Elements of $\{0,1\}^N$ may be thought of as vectors, with vector addition the operation $\codexor$ of bitwise addition modulo 2, scalar-multiplication by $0$ setting every bit to zero, and scalar-multiplication by $1$ leaving the vector unchanged.  As in standard linear algebra, there is the concept of a basis.  Given a linear subspace (that is, a linear code), we can find a finite set of codewords $g_1,\ldots, g_k$ so that every codeword can be written uniquely as a sum
\begin{equation}
\sum_{i=1}^k x_i g_i,
\end{equation}
where the coefficients $x_1,\ldots,x_k$ are each either $0$ or $1$.  The set $g_1,\ldots,g_k$ is then a basis, but in coding theory, it is also called a generating set for the linear code.  As in real linear algebra, there is no expectation that a generating set is uniquely determined by the linear code, but the number $k$, called the {\em\/dimension} of the code, is always the same for a given linear code. It is common to say we have an $[N,k]$ linear code when $N$ is the length of the codewords and $k$ is the dimension. It is traditional to denote a generating set as an $N{\times}k$ matrix, where each row is an element of the generating set.

If $v\in\{0,1\}^N$, we define the {\em\/weight} of $v$, written $\wt(v)$, to be the number of $1$s in $v$.  For instance, the weight of $01101$ is $\wt(01101)= 3$.

A binary linear block code is called {\em\/even} if every codeword in the code has even weight.  It is called {\em\/doubly even} if every codeword in the code has weight is divisible by $4$.  Examples of doubly even codes are given in Section~\ref{s:deC} below.

If $v$ and $w$ are in $\{0,1\}^N$, then $v\codeand w$ is defined to be the ``bitwise and'' of $v$ and $w$: the $i$th bit of $v\codeand w$ is $1$ if and only if the $i$th bit of $v$ and the $i$th bit of $w$ are both $1$.  A basic fact in $\{0,1\}^N$ is
\begin{equation}
\wt(v \codexor w)=\wt(v)+\wt(w)-2\,\wt(v\codeand w).\label{eqn:inclusionexclusion}
\end{equation}
There is a standard inner product.  If we write $v$ and $w$ in $\{0,1\}^N$ as $(v_1,\ldots,v_N)$ and $(w_1,\ldots,w_N)$, then
\begin{equation}
\langle v,w\rangle \equiv \sum_{i=1}^N v_i\,w_i\pmod{2}.\label{eqn:codeinnerproduct}
\end{equation}
We call $v$ and $w$ orthogonal if $\langle v,w\rangle=0$.  This occurs whenever there are an even number of bit positions where both $v$ and $w$ are $1$.  Note that $\langle v,v\rangle\equiv\wt(v)\pmod{2}$, and thus, when $\wt(v)$ is even, $v$ is orthogonal to itself.  Also note that $\langle v,w\rangle \equiv \wt(v\codeand w)\pmod{2}$.  One important consequence for us is that if $\wt(v)$ and $\wt(w)$ are multiples of 4, then  (\ref{eqn:inclusionexclusion}) implies that $\wt(v \codexor w)$ is a multiple of 4 if and only if $v$ and $w$ are orthogonal.

\section{Supersymmetric Representations and Adinkras}
 \label{adinkras}
The $N$-extended supersymmetry algebra without central charges in one dimension is generated by the time-derivative, $\ddt$, and the $N$ supersymmetry generators, $Q_1, \ldots, Q_N$, satisfying the following supersymmetry relations:
\begin{equation}
 \big\{\,Q_I\,,\,Q_J\,\big\}=2\,i\,\d_{IJ}\,\ddt,\quad
  \big[\,\ddt\,,\,Q_I\,\big] =0,\quad  I,J=1,\ldots,N. \label{eSuSy}
\end{equation}
In this section we determine some essential facts about the transformation rules of these operators on fields for which it is possible to maintain the physically motivated concept of engineering dimension.
We note that since the time-derivative has engineering dimension $[\ddt]=1$, the supersymmetry relations\eq{eSuSy} imply that the engineering dimension of the supersymmetry generators is $[Q_I]=\inv2$.

\subsection{Supermultiplets as Representations of Supersymmetry}
A real supermultiplet $\sM$ is a real, unitary, finite-dimensional, linear representation of the algebra\eq{eSuSy}, in the following sense: It is spanned by a basis of real bosonic and fermionic {\em\/component fields\/}, $\f_1(\t),\ldots,\f_m(\t)$ and $\j_1(\t),\ldots,\j_m(\t)$, respectively; each component field is a function of time, $\t$. The supersymmetry transformations, generated by the Hermitian operators $Q_1,\cdots,Q_N$, act linearly on $\sM$ while satisfying Eqs.\eq{eSuSy}, \ie, Eqs.\eq{e[dQ,dQ]}, below.
 The supermultiplet is {\em\/off-shell\/} if no differential equation is imposed on it\Ft{Logically, it is possible for some---but not all---component fields to become subject to a differential equation. This does not violate the literal definition of the off-shell supermultiplet. However, it does obstruct standard methods of quantization, which is our eventual purpose for keeping supermultiplets off-shell. For an example in 4-dimensional supersymmetry, see Ref.\cite{r6-4}.}.
 The number of bosons as fermions is then the same, guaranteed by supersymmetry.

The full supersymmetry transformation, which preserves the reality of the component fields $\f_A(\t)$ and $\j_A(\t)$ and is generated by the $Q_I$, takes the form:
\begin{equation} 
 \begin{bmatrix} \f_A(\t)\\ \j_A(\t) \end{bmatrix} ~\mapsto~ e^{\d_Q (\e)}\> 
 \begin{bmatrix} \f_A(\t)\\ \j_A(\t) \end{bmatrix},\quad 
\d_Q(\e) := -i\e^I\,Q_I,\quad 
\f_A(\t),\j_A(\t) \in \sM, 
\label{edQ(e)} 
\end{equation} 
where $\e^I$ are the (anticommuting) parameters of the transformation. The infinitesimal transformation operators, $\d_Q(\e)$, in turn satisfy the relation:
\begin{equation}
 \big[\,\d_Q(\e)\,,\,\d_Q(\h)\,\big] = 2\,i\,(\e^I\d_{IJ}\,\h^J)\,\ddt.
 \label{e[dQ,dQ]}
\end{equation}
Both the relations\eq{eSuSy} and\eq{e[dQ,dQ]} equivalently specify the supersymmetry algebra, and the infinitesimal transformation operator $\d_Q(\e)$ may be---and, in the physics literature, is---used to study the supersymmetry transformations in supermultiplets.

\subsection{Building Supermultiplets from Adinkras}
 \label{sCloSuSy}
Refs.\cite{rA,r6-1,r6-2} introduced and then studied Adinkras, diagrams that encode the transformation rules\eq{edQ(e)} of the component fields under the action of the supersymmetry generators $Q_1,\ldots,Q_N$.

\Remk\label{r:1}
Classical Lie groups have one, or in the case of $\Spin(2n)$ two, {\em\/fundamental\/} irreducible representations from which all (infinitely many) other finite-dimensional, unitary representations can be constructed by tensoring the fundamental one(s), symmetrizing in all possible ways and subtracting ``traces'' using the invariant tensors of the group. The analogous method using Adinkras was sketched in Refs.\cite{r6-1,r6-3c}. Thus, Adinkras and the corresponding supermultiplets should represent at least the {\em\/fundamental\/} representations of supersymmetry, from which to build the infinite tower of all others, very much akin to the case of classical Lie algebras.

Without further ado, we focus on the supermultiplets that {\em\/can\/} be depicted by Adinkras:

\begin{defn}\label{dAd}
A supermultiplet $\sM$ is {\em\bfseries\/adinkraic\/} if it admits a basis, $(\f_1,\cdots,\f_m\,|\,\j_1,\cdots,\j_m)$,  of component fields such that each $Q_I\in\{Q_1,\cdots,Q_N\}$ acts upon each $\f_A\in\{\f_1,\cdots,\f_m\}$ so as to produce:\vspace{-2mm}
\begin{subequations}\label{eQbf}
 \begin{align}
 Q_I \, \f_A(\t) &= c\,\ddt^{\l}\, \j_B(\t),\quad\text{where}\quad
  c=\pm1,\quad \l\in\{0,1\},\quad\j_B\in\{\j_1,\cdots,\j_m\},
   \label{eQb}\\[-2mm]
\intertext{and the right-hand side choices depend on $I$ and $A$. In turn, this $Q_I$ acting on this $\j_B$ produces:\vspace{-2mm}}
 Q_I \, \j_B(\t) &= \frac{i}{c}\,\ddt^{1-\l}\, \f_A(\t),\label{eQf}
\end{align}
\end{subequations}
and the pair of formulae\eq{eQbf} exhausts the action of each $Q_I$ upon each component field.
\end{defn}

Every adinkraic supermultiplet may be depicted as an Adinkra:

\begin{defn}\label{dA}
The {\em\bfseries\/Adinkra\/}, $\cA_\sM$, of an adinkraic supermultiplet $\sM$ is a labeled, directed graph, $(W,E,C,O,D)$, where $W$ is a bipartitioned set of vertices ($\,V$): one half white, the other black; $E$ is the set of edges, each edge connecting two vertices of opposite color; $C$ the set of colorings of the edges; $O$ the set of their orientations; and $D$ the set of their dashedness.

 A vertex is assigned to each component field of $\sM$, white for bosons, black for fermions.
 A transformation rule of the form\eq{eQb} corresponds to an $I$-colored edge connecting the vertex corresponding to $\f_A$ to the vertex corresponding to $\j_B$. If $c=-1$, the edge is dashed, and the edge is oriented from $\f_A$ to $\j_B$ if $\l=0$ and the other way around if $\l=1$; see Table~\ref{t:A}.
\end{defn}
\begin{table}[ht]
  \centering
  \begin{tabular}{@{} cc|cc @{}}
    \makebox[15mm]{\bf Adinkra} & \makebox[40mm]{\bf\boldmath$Q$-action} 
  & \makebox[15mm]{\bf Adinkra} & \makebox[40mm]{\bf\boldmath$Q$-action} \\ 
    \hline
    \begin{picture}(5,9)(0,5)
     \put(0,0){\includegraphics[height=11mm]{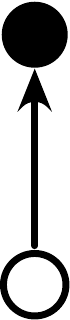}}
     \put(3,0){\scriptsize$A$}
     \put(3,9){\scriptsize$B$}
     \put(-1,4){\scriptsize$I$}
    \end{picture}\vrule depth4mm width0mm
     & $Q_I\begin{bmatrix}\j_B\\\f_A\end{bmatrix}
           =\begin{bmatrix}i\dot\f_A\\\j_B\end{bmatrix}$
  & \begin{picture}(5,9)(0,5)
     \put(0,0){\includegraphics[height=11mm]{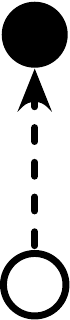}}
     \put(3,0){\scriptsize$A$}
     \put(3,9){\scriptsize$B$}
     \put(-1,4){\scriptsize$I$}
    \end{picture}\vrule depth4mm width0mm
     & $Q_I\begin{bmatrix}\j_B\\\f_A\end{bmatrix}
           =\begin{bmatrix}-i\dot\f_A\\-\j_B\end{bmatrix}$ \\[5mm]
    \hline
    \begin{picture}(5,9)(0,5)
     \put(0,0){\includegraphics[height=11mm]{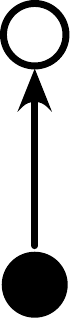}}
     \put(3,0){\scriptsize$B$}
     \put(3,9){\scriptsize$A$}
     \put(-1,4){\scriptsize$I$}
    \end{picture}\vrule depth4mm width0mm
     &  $Q_I\begin{bmatrix}\f_A\\\j_B\end{bmatrix}
           =\begin{bmatrix}\dot\j_B\\i\f_A\end{bmatrix}$
  & \begin{picture}(5,9)(0,5)
     \put(0,0){\includegraphics[height=11mm]{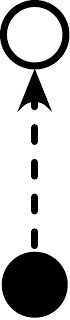}}
     \put(3,0){\scriptsize$B$}
     \put(3,9){\scriptsize$A$}
     \put(-1,4){\scriptsize$I$}
    \end{picture}\vrule depth4mm width0mm
     &  $Q_I\begin{bmatrix}\f_A\\\j_B\end{bmatrix}
           =\begin{bmatrix}-\dot\j_B\\-i\f_A\end{bmatrix}$ \\[5mm]
    \hline
  \multicolumn{4}{l}{\vrule height3.5ex width0pt\parbox{120mm}{\small The edges are here labeled by the variable index $I$; for fixed $I$, they are drawn in the $I^{\text{th}}$ color.}}
  \end{tabular}
  \caption{The correspondences between the Adinkra components and supersymmetry transformation formulae:
    vertices\,$\iff$\,component fields;
    vertex color\,$\iff$\,fermion/boson;
    edge color/index\,$\iff$\,$Q_I$;
    edge dashed\,$\iff$\,$c=-1$; and
    orientation\,$\iff$\,placement of $\ddt$.
    They apply to all $\f_A,\j_B$ within a supermultiplet and all $Q_I$-transformations amongst them.}
  \label{t:A}
\end{table}

We can also use the Adinkra to reconstruct the adinkraic supermultiplet, since the Adinkra contains all the information necessary to write down the transformation rules. Thus, an Adinkra is simply a graphical depiction of the transformation rules\eq{eQbf}.

\Remk\label{r:2}
Owing to Remark~\ref{r:1}, we do not expect the class of {\em\/adinkraic supermultiplets\/} to exhaust the representations of supersymmetry\eq{eSuSy}, but rather to serve as the {\em\/fundamental\/} representations from which to build all the other ones. In addition to the usual construction (tensoring, symmetrizing and taking traces), this will also involve the question when is it is possible for {\em\/all\/} the supersymmetry transformations in a supermultiplet to conform to Eqs.\eq{eQbf} {\em\/simultaneously\/}; we will address this under separate cover.
 In addition, certain exceptional adinkraic supermultiplets may be depicted, for high enough $N$, by more than one Adinkra. A case-by-case determination if two given Adinkras correspond to the same supermultiplet is not too difficult, but a systematic study is deferred to a subsequent effort\cite{r6-3.2}.

In view of Remarks~\ref{r:1}--\ref{r:2}, the fact that most important physical examples {\em\/can\/} be described using Adinkras reinforces their interpretation as {\em\/fundamental\/} and provides the underlying motivation for our present study. We therefore focus on classifying those supermultiplets that do correspond to Adinkras, and so the question of which supermultiplets cannot be so depicted remains outside our present scope. In turn, Ref.\cite{r6-1} shows that all Adinkras sharing the same topology and their corresponding supermultiplets---and indeed their superfield representations---may be obtained from any one of them; it then remained to classify the topologies available to Adinkras, up to Remarks~\ref{r:1}--\ref{r:2}, and we now turn to that.

\section{Adinkra Chromotopologies}
 \label{AdiTop}
It is the purpose of this paper to classify the possible {\em\/topologies\/} for Adinkras. To this end, we will need the precise definition:

\begin{definition}\label{dA2T}
The {\em\bfseries\/topology\/} of an Adinkra, $\fT(\cA_\sM)$, is the graph $(V,E)$ consisting of only the (unlabeled) vertices and edges of the Adinkra; cf.\ Definition~\ref{dAd}. Also, $\fT(\sM):=\fT(\cA_\sM)$.
\end{definition}
In particular, from Definition~\ref{dA} of an Adinkra, we forget the bipartition (black or white) of the vertices and all additional information associated to the edges, namely, the edge-coloring (to which $Q_I$ corresponds), dashedness ($c=\pm1$), and direction (the exponent of $\ddt$ in Eqs.{\rm\eq{eQbf}}, and so in fact all $\ddt$).
\begin{definition}\label{dA2C}
The {\em\bfseries\/chromotopology\/} of an Adinkra is the topology of the Adinkra, together with the vertex bipartition (coloring each vertex black or white), and the edge coloring (assigning a color to each edge).
\end{definition}
In particular, from Definition~\ref{dA} of an Adinkra,  we forget the dashedness of the edges and the direction of the arrow along the edge.

\subsection{Cubical Adinkras}
The fundamental example of an Adinkra topology is that of the $N$-cube, $I^N=[0,1]^N$. It has $2^N$ vertices and $N{\cdot}2^{N-1}$ edges. We may embed it in $\IR^N$ by locating the vertices at the points $\vec{p}=(p_1,\cdots,p_N)\in\IR^N$, where $p_I=0,1$ in all $2^N$ possible combinations.  An edge connects two vertices that differ in precisely one coordinate. For every vertex, $\vec{p}$, the {\em\/weight of $\vec{p}$}, written $\wt(\vec{p})$, equals the number of $J\in\{1,\cdots,N\}$ for which $p_J=1$.

There is a natural chromotopology associated to the $N$-cube: 
 As the weights of the vertices are either odd or even, we color them either black or white, respectively; flipping this choice is called the `Klein flip'.
 We associate the numbers from $1$ to $N$ with $N$ different colors, and then color each edge of the $N$-cube that connects vertices which differ only in the $I$th coordinate with the $I$th color.  The result is called the colored $N$-cube.

We now construct four supermultiplets: $\sM^\diamond_{I^N}$, $\sM^=_{I^N}$ and their Klein-flips, all with the chromotopology of a colored $N$-cube, $I^N$.  

\subsubsection{The Exterior Supermultiplet}
 \label{s:EM}
We start with a bosonic, real field $\f_0(\t)$, write
\begin{equation}
 Q^{\vec{p}}:=Q_1^{p_1}\cdots Q_N^{p_N}, \label{eMQ}
\end{equation}
and define all other component fields in reference to $\f_0(\t)$:
\begin{equation}
 \sM^\diamond_{I^N}:=
 \big\{\, F_{\vec{p}}(\t)
       :=(-i)^{\wt(\vec{p})\choose2}Q^{\vec{p}}\,\f_0(\t),~
 \vec{p}\in I^N \,\big\}.
 \label{eM_IN}
\end{equation}
Since
\begin{equation}
 \left((-i)^{\wt(\vec{p})\choose2}\,Q^{\vec{p}}\right)^\dagger
 =\left((+i)^{\wt(\vec{p})\choose2}\right)
  \left((-1)^{\wt(\vec{p})\choose2}Q^{\vec{p}}\right)
 =(-i)^{\wt(\vec{p})\choose2}\,Q^{\vec{p}}, \label{eRF}
\end{equation}
the component fields $F_{\vec{p}}(\t)$ in \Eq{eM_IN} are real.
 Each vertex $\vec{p}$ is marked by a white node if $\wt(\vec{p})$ is even, and \Eq{eM_IN} assigns it the bosonic component field $F_{\vec{p}}(\t)$; when $\wt(\vec{p})$ is odd, the assigned $F_{\vec{p}}(\t)$ is fermionic and its node is black. This binary ``cubist'' notation, $F_{\vec{p}}$, is not hard to translate into the one more familiar to physicists: each component, $p_I=0,1$, specifies if the index $I$ is absent or present, respectively, in the antisymmetrized multi-index of the component field. Thus, {\em\/e.g.,\/} $F_{000}=F=\f_0(\t)$, $F_{100}=F_1$, $F_{101}=F_{[13]}$, $F_{011}=F_{[23]}$, $F_{111}=F_{[123]}$, where the antisymmetrization on multiple indices stems from the anticommutivity of the $Q_I$'s.
 
Next, using that
\begin{equation}
 Q_I\,Q^{\vec{p}}=(-1)^{\wt(\vec{p}<I)}\,(i\ddt)^{\vec{p}\cdot\vec{e}_I}\,
  Q^{\vec{p}\,\codexor\,\vec{e}_I}, \label{eQIQp}
\end{equation}
where $\wt(\vec{p}<I)$ counts the number of nonzero components of $\vec{p}$ before the $I^{\text{th}}$, $\vec{e}_I$ is the unit vector in the $I^{\text{th}}$ direction and\begin{equation}
 \vec{p}\,\codexor\vec{e}_I\equiv(\vec{p}+\vec{e}_I)\pmod2, \label{ep+eI}
\end{equation}
is the component-wise exclusive or, we compute by direct application of $Q_I$:\begin{equation}
 Q_I\,F_{\vec{p}}(\t)=
  (-1)^{\wt(\vec{p}<I)+(\vec{p}\cdot\vec{e}_I)(\wt(\vec{p})+1)}\,
  i^{\wt(\vec{p})}\,
  \big(\ddt^{(\vec{p}\cdot\vec{e}_I)}F_{\vec{p}\,\codexor\vec{e}_I}(\t)\big).
 \label{eQFp}
\end{equation}
Thus, for each $I$, the $Q_I$-transformation connects precisely those two vertices that differ in the $I^{\text{th}}$ coordinate. The corresponding edge in the Adinkra, drawn in the $I^{\text{th}}$ color, is oriented from the vertex where that coordinate is 0 to where it is 1. It is either solid or dashed depending on the exponents of $(-1)$ and $i$.
 This makes the topology of the $N$-cube manifest and also proves that $\sM^\diamond_{I^N}$ is adinkraic, seeing that \Eq{eQFp} conforms precisely to Eqs.\eq{eQbf}.

The Klein-flip of $\sM^\diamond_{I^N}$ is obtained by starting from a fermion, $\j_0(\t)$, in place of a boson, $\f_0(\t)$.

 The engineering dimensions of the component fields of $\sM^\diamond_{I^N}$ are:
\begin{equation}
 [F_{\vec{p}}\,]=w_0+\inv2\wt(\vec{p}), \label{eEDT}
\end{equation}
where $w_0$ is a constant determined by the choice of a Lagrangian. 
 The number of component fields, listed by their increasing engineering dimension, determines the $\ZZ$-graded dimension of this representation:
\begin{equation}
 \begin{aligned}
  \dim(\sM^\diamond_{I^N})
  &=\dim\big(F_{0\cdots0}=\f_0|
             F_{10\cdots0},F_{010\cdots0},\cdots|
             \cdots|F_{11\cdots1}\big),\\
  &=\Big({\ttt 1\big|N\big|{N\choose2}\big|{N\choose3}\big|\cdots\big|
         {N\choose N-1}\big|{N\choose N}}\Big),
  \end{aligned}
 \label{eESMd}
\end{equation}
so that there are $2^{N-1}$ bosonic and $2^{N-1}$ fermionic component fields in $\sM^\diamond_{I^N}$.

Adinkras corresponding to $\sM^\diamond_{I^N}$ were called Top Clifford Algebra superfields\cite{rA}. The supermultiplets themselves will be recognized by supersymmetry practitioners to be also representable as real, ``unconstrained'' Salam--Strathdee superfields, in the familiar $\q$-expansion of which $F_{\vec{p}}(\t)$ occurs as the coefficient of the $\q_1^{p_1}\cdots \q_N^{p_N}$ monomial. Since the superspace $\q$'s generate an {\em\/exterior\/} algebra, we refer to $\sM^\diamond_{I^N}$ as the {\em\/exterior supermultiplet\/}.  

In closing this section, let us note the definition of the component fields introduced above is sufficient for the purposes of the present work as our goals simply involved studying the representation theory of one-dimensional supersymmetry.  If we were to consider the case of some dynamical theory, \ie, write a superfield Lagrangian to specify some dynamics, then an alternative definition of the components would be necessary, as was done in Ref.\cite{r6-2}, for example.  In all manifestly supersymmetrical theories, there exist `twisted'  versions of the supercharges denoted by ${\rm D}_1, \ldots, {\rm D}_N$, satisfying the following relations:
\begin{equation}
 \big\{\,{Q}_I\,,\,{\rm D}_J\,\big\}=0,\quad  
 \big\{\,{\rm D}_I\,,\,{\rm D}_J\,\big\}=2i\,\d_{IJ}\,\ddt,\quad
  \big[\,\ddt\,,\,{\rm D}_I\,\big] =0,\quad  I,J=1,\ldots,N. \label{eSuSytw}
\end{equation}
An operator analogous to (4.1) may be defined by
\begin{equation}
 {\rm D}^{\vec{p}} \Defl {\rm D}_1^{p_1}\cdots {\rm D}_N^{p_N}, \label{eMQ1}
\end{equation}
and this is applied to superfields to define components upon taking the limit as all Grassmann coordinate
vanish.  This entire process is known as defining components by projection.

\subsubsection{The Clifford Supermultiplet}
 \label{s:CM}
Another important supermultiplet, $\sM^=_{I^N}:=\{\bF_{\vec{p}}(\t),\vec{p}\in\{0,1\}^N\}$, may be obtained from $\sM^\diamond_{I^N}$ {\em\/via\/} the {\em\/non-local\/} transform:
\begin{equation}
  \bF_{\vec{p}}(\t):=
   \ddt^{\lfloor(N-\wt(\vec{p}))/2\rfloor}
    \,F_{\vec{p}}(\t),\qquad\text{\ie},\qquad
  F_{\vec{p}}(\t)=\ddt^{\lfloor(\wt(\vec{p})-N)/2\rfloor}\,\bF_{\vec{p}}(\t).
 \label{eTop2Base}
\end{equation}
\Remk\label{r:NL}
Since the transformation $\sM^\diamond_{I^N} \iff \sM^=_{I^N}$ is non-local for $N>1$, these two supermultiplets must be regarded as distinct. With increasing $N$, it is clear that there is also a combinatorially growing multitude of ``intermediate'' supermultiplets, all having the same, $N$-cubical topology, and obtainable one from another by means of non-local transformations of the kind\eq{eTop2Base}\cite{r6-1}.

Notice that, in contrast to \Eq{eEDT},
\begin{equation}
 [\bF_{\vec{p}}\,]=
 \begin{cases}
  w_0+\lfloor\frac{N}2\rfloor &\text{if $\wt(\vec{p})$ is even},\\
  w_0+\lfloor\frac{N}2\rfloor+\inv2 &\text{if $\wt(\vec{p})$ is odd}.
 \end{cases}
 \label{eEDB}
\end{equation}
The Adinkra corresponding to the supermultiplet $\sM^=_{I^N}$ was called the ``base'' Adinkra in Ref.\cite{rA}.
Also, all off-shell bosonic component fields have the same engineering dimension, $w_0$, and so do all fermionic ones, equal to $w_0{+}\inv2$.
 The number of component fields, listed by their increasing engineering dimension, is $(2^{N-1}|2^{N-1})$---considerably simpler than \Eq{eESMd}.

\begin{example}
The simplest nontrivial example of a Clifford supermultiplet occurs at $N=2$ and consists of two bosons $\f_1,\f_2$ and two fermions $\j_1,\j_2$, where the two supersymmetry generators act as follows:
\begin{subequations}\label{N2IS}
 \begin{align}
 Q_1\,\f_1&=\j_1,      & Q_2\,\f_1&=\j_2,\label{Qf1}\\*
 Q_1\,\f_2&=\j_2,      & Q_2\,\f_2&=-\j_1,\label{Qf2}\\*
 Q_1\,\j_1&=i\dot\f_1, & Q_2\,\j_1&=-i\dot\f_2,\label{Qj1}\\*
 Q_1\,\j_2&=i\dot\f_2,
  \begin{picture}(30,0)(-12,0)
  \put(10,0){\includegraphics[height=20mm]{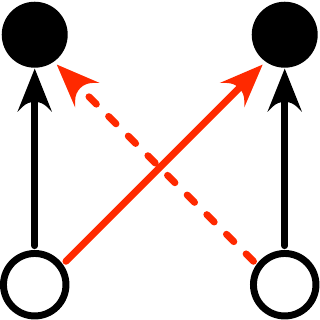}}
  \put(6,0){\small$\f_1$}
  \put(6,20){\small$\j_1$}
  \put(30,0){\small$\f_2$}
  \put(30,20){\small$\j_2$}
 \end{picture}\quad
                        & Q_2\,\j_2&=i\dot\f_1,\label{Qj2}
 \end{align}
\end{subequations}
The black (vertical) edges represent $Q_1$-action and red (diagonal) edges that of $Q_2$.
Forgetting either one of $Q_1,Q_2$, we remain with two {\em\/disjoined\/} $N=1$ supermultiplets. The $Q_2$-action involves here a scaling factor of $c=-1$ in one pair, depicted by a dashed edge. This could be remedied by redefining $\j_1$, for instance, but only to induce a $c=-1$ scaling factor in the $Q_1$-action $\f_1\iff\j_1$.  No such redefinition will eliminate the need for a minus sign in at least one pair of the transformations\eqs{Qf1}{Qj2}.

The $N=2$ exterior supermultiplet and Adinkra is:
\begin{subequations}\label{N2ES}
 \begin{align}
 Q_1\,\f&=\j_1,      & Q_2\,\f&=\j_2,\label{Q.f1}\\*
 Q_1\,\j_1&=i\dot\f, & Q_2\,\j_1&=-iF,\label{Q.j1}\\*
 Q_1\,\j_2&=iF, & Q_2\,\j_2&=i\dot\f,\label{Q.j2}\\*
 Q_1\,F&=\dot\j_2,
  \begin{picture}(30,0)(-12,0)
  \put(10,-2){\includegraphics[height=25mm]{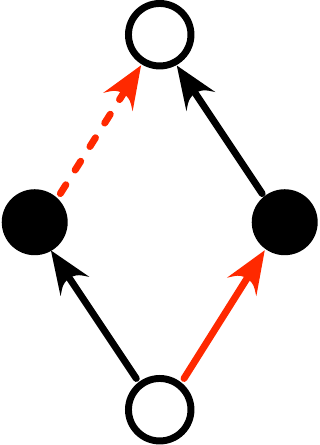}}
  \put(13,-1){\small$\f$}
  \put(5,10){\small$\j_1$}
  \put(29,10){\small$\j_2$}
  \put(22,20){\small$F$}
 \end{picture}\quad
                        & Q_2\,F&=-\dot\j_1,\label{Q.f2}
 \end{align}
\end{subequations}
Notice that $(\f,F|\j_1,\j_2)\to(\f_1,\dot\f_2|\j_1,\j_2)$ and $(\f_1,\f_2|\j_1,\j_2)\to(\f,\ddt^{-1}F|\j_1,\j_2)$ is the manifestly non-local bijective field redefinition between the two supermultiplets.

 Together with their Klein-flips, these are all the $N=2$ Adinkras and supermultiplets\cite{rA,r6-1}.
\end{example}

\begin{example}
As another example, let $N=3$. For $\sM^\diamond_{I^3}$, denote the component fields $F_{\vec{p}}$ alternatively as $\f_{\vec{p}}$ for $\wt(\vec{p})$ even (bosons), and  $\j_{\vec{p}}$ for $\wt(\vec{p})$ odd (fermions). For $\sM^=_{I^3}$, denote the component fields $\bar\f_{\vec{p}}$ and $\bar\j_{\vec{p}}$, respectively, where \Eq{eTop2Base} implies, for example:
\begin{equation}
 \bar\f_{000}=\dot\f_{000},\qquad
 \bar\j_{100}=\dot\j_{100},\qquad
 \bar\f_{110}=\f_{110},\qquad
 \bar\j_{111}=\j_{111}.
\end{equation}
Following Definition~\ref{dA}, we then draw:
\begin{equation}
 \vC{
 \begin{picture}(50,40)(0,-5)
 \put(-20,18){\large$\sM^\diamond_{I^3}$~:}
 \put(-3,-3){\includegraphics[height=35mm]{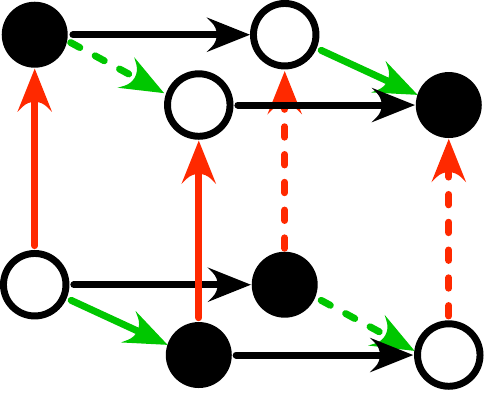}}
 \put(-3,30){\llap{$\j_{010}$}}
 \put(-3,8){\llap{$\f_0=\f_{000}$}}
 \put(12,-2){\llap{$\j_{001}$}}
 \put(3,20){$\f_{011}$}
 \put(25,9){$\j_{100}$}
 \put(27,30){$\f_{110}$}
 \put(41,-1){$\f_{101}$}
 \put(41,22){$\j_{111}$}
 \end{picture}}
 \quad\text{\it vs.}~~\qquad
  \vC{
 \begin{picture}(50,40)(-15,-5)
 \put(-20,18){\large$\sM^=_{I^3}$~:}
 \put(-3,-3){\includegraphics[height=35mm]{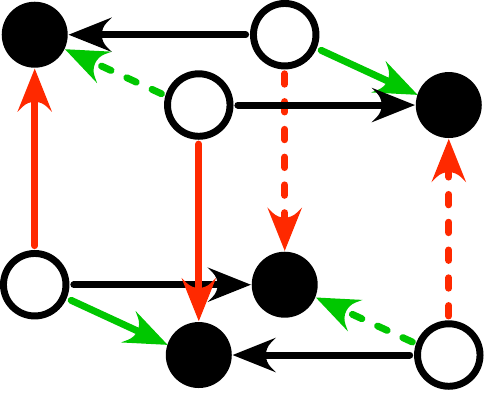}}
 \put(-3,30){\llap{$\bar\j_{010}$}}
 \put(-3,8){\llap{$\dot\f_0=\bar\f_{000}$}}
 \put(12,-2){\llap{$\bar\j_{001}$}}
 \put(3,20){$\bar\f_{011}$}
 \put(25,9){$\bar\j_{100}$}
 \put(27,30){$\bar\f_{110}$}
 \put(41,-1){$\bar\f_{101}$}
 \put(41,22){$\bar\j_{111}$}
 \end{picture}}
 \label{eCube}
\end{equation}
The distinction between these becomes clearer in the convention of Ref.\cite{r6-1}, with the nodes drawn at a height proportional to the engineering dimension of the corresponding component field:
\begin{equation}
  \vC{
  \begin{picture}(50,60)(0,0)
  \put(-17,30){\large$\sM^\diamond_{I^3}$~:}
  \put(0,2){\includegraphics[height=55mm]{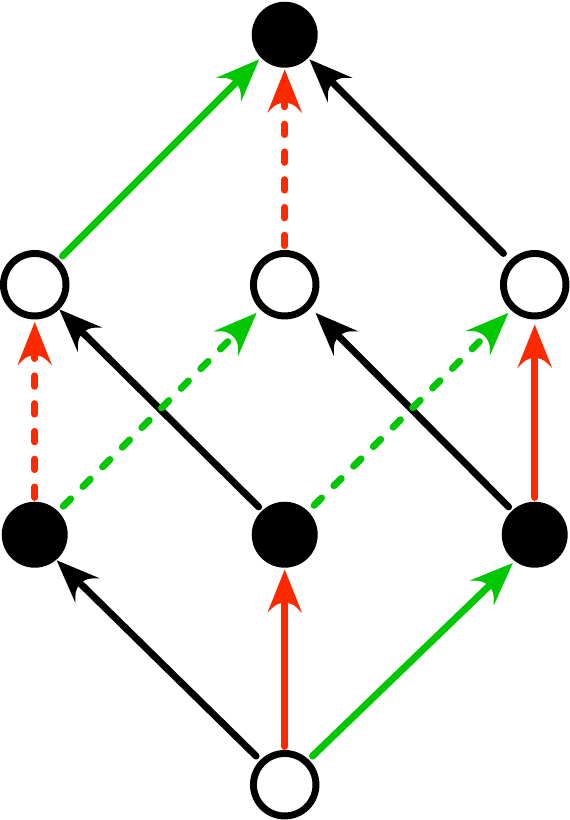}}
   \put(-3,3){$\f_0=\f_{000}$}
   \put(-10,22){$\j_{100}$}
   \put(7,22){$\j_{010}$}
   \put(38,22){$\j_{001}$}
   \put(-10,40){$\f_{110}$}
   \put(7,40){$\f_{101}$}
   \put(38,40){$\f_{011}$}
   \put(7,55){$\j_{111}$}
 \end{picture}} \qquad\text{\it vs.}\qquad\qquad
  \vC{
  \begin{picture}(50,60)(0,-10)
  \put(-13,10){\large$\sM^=_{I^3}$~:}
  \put(0,2){\includegraphics[height=20mm]{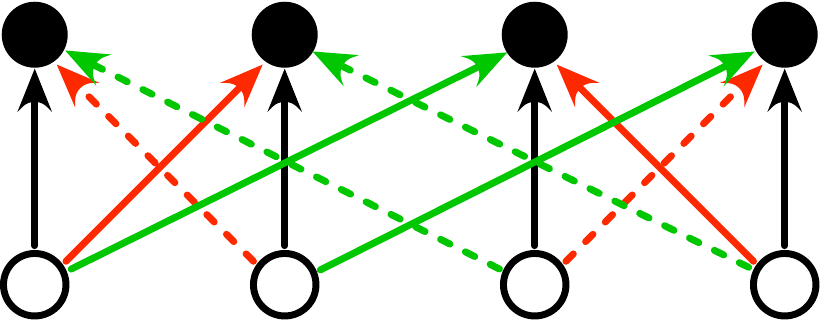}}
   \put(-3,-1){$\bar\f_{000}$}
   \put(13,-1){$\bar\f_{110}$}
   \put(29,-1){$\bar\f_{101}$}
   \put(45,-1){$\bar\f_{021}$}
   \put(-3,24){$\bar\j_{100}$}
   \put(13,24){$\bar\j_{010}$}
   \put(29,24){$\bar\j_{001}$}
   \put(45,24){$\bar\j_{111}$}
 \end{picture}}
 \label{eCubED}
\end{equation}
The Adinkra corresponding to the Klein-flip of $\sM^\diamond_{I^N}$ is obtained from \Eq{eCube} by reversing the vertex coloring, black$\,\iff\,$white.
\end{example}

We now turn to determine what other topologies are available to Adinkras, and correspondingly to supermultiplets.

\subsection{Quotients of the $N$-Cubes}
 \label{cubeproject}
Given an $N$-cubical Clifford supermultiplet, $\sM^=_{I^N}$, a novel opportunity emerges for $N\geq4$: Certain identifications amongst the $2^{N-1}$ bosons and the $2^{N-1}$ fermions are possible such that the induced action of the $Q_1,\cdots,Q_N,\ddt$ upon the components of the so projected supermultiplet remains consistent with Eqs.\eq{eSuSy}.
 This is {\em\/not\/} possible for $N<4$.
\begin{example}
For example, attempting to identify $\f_1=\f_2$ in Eqs.\eq{N2IS} leads either to a trivial supermultiplet consisting of a single bosonic constant, or to an immediate contradiction between the left-hand side and the right-hand side equations\eq{Qf1} and\eq{Qf2}:
\begin{align}
 Q_1\,\f_1(\t)&=\j_1(\t), & Q_2\,\f_1(\t)&=\j_2(\t),\tag{\ref{Qf1}$'$}\\
 Q_1\,\f_1(\t)&=\j_2(\t), & Q_2\,\f_1(\t)&=-\j_1(\t).\tag{\ref{Qf2}$'$}
\end{align}
From the left-hand side pair, it follows that $\j_2(\t)=+\j_1(\t)$, whereas the right-hand side pair implies $\j_2(\t)=-\j_1(\t)$. This is consistent only if $\j_1(\t)=0=\j_2(\t)$, whereupon \Eq{eSuSy} implies that $\ddt\f_1(\t)=0$, reducing this supermultiplet to a trivial, single constant boson.  Alternately, adding the left-hand side Eq.~(\ref{Qf1}$'$) to the right-hand side Eq.~(\ref{Qf2}$'$) leads to $ Q_1\,\f_1 +  Q_2\,\f_1$ = 0.  In a similar manner, subtracting the  second result in Eq.~(\ref{Qf1}$'$) from the first result in Eq.~(\ref{Qf2}$'$) leads to $ Q_1\,\f_1 - Q_2\,\f_1 = 0$.
 Together, these imply that $ Q_1\,\f_1 =  Q_2\,\f_1 = 0$. 
\end{example}

\subsubsection{The $N=4$ Projection}
To clarify how such projections {\em\/can\/} occur, we examine the simplest of them, for $N=4$.
\begin{example}
\label{PrHyCube}
The $N=4$ Isoscalar supermultiplet (\ie, the Clifford supermultiplet with bosons on the bottom) is
\begin{equation}
  \sM^=_{I^4}=\{\,\bar\f_{0000},\bar\f_{1100},\ldots,\bar\f_{1111}|
                   \bar\j_{1000},\ldots,\bar\f_{1110},\ldots\,\},
 \label{eM_IN=}
\end{equation}
where the ellipses indicate component fields the subscript of which is obtained as a permutation on the one preceding the ellipses.
 The Adinkra of this supermultiplet is:
\begin{equation}
 \vC{
 \begin{picture}(120,55)(3,-5)
 \put(-15,42){\large$\sM^=_{I^4}$~:}
 \put(-3,-3){\includegraphics[height=50mm]{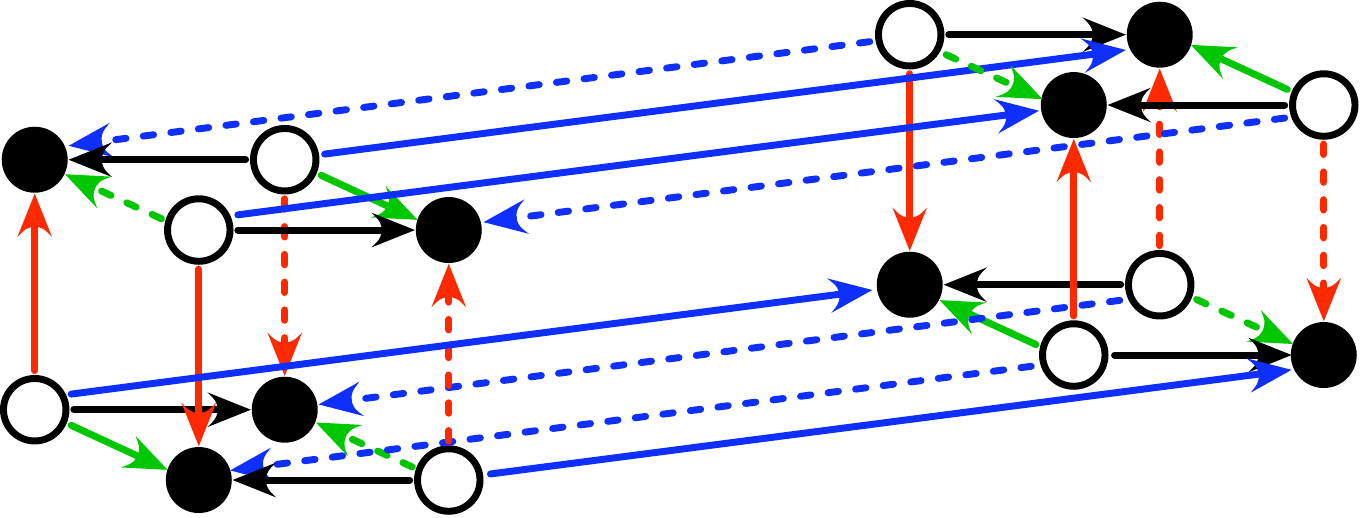}}
 \put(-14,32){$\bar\j_{0100}$}
 \put(-14,8){$\bar\f_{0000}$}
 \put(3,-2){$\bar\j_{0010}$}
 \put(3,21){$\bar\f_{0110}$}
 \put(26,11){$\bar\j_{1000}$}
 \put(29,30.5){$\bar\f_{1100}$}
 \put(44,-2){$\bar\f_{1010}$}
 \put(43,21){$\bar\j_{1110}$}
 \put(71,45){$\bar\f_{0101}$}
 \put(71,22){$\bar\j_{0001}$}
 \put(86,13){$\bar\f_{0011}$}
 \put(86,34){$\bar\j_{0111}$}
 \put(114,21){$\bar\f_{1001}$}
 \put(114,45){$\bar\j_{1101}$}
 \put(130,12){$\bar\j_{1011}$}
 \put(130,36){$\bar\f_{1111}$}
 \end{picture}}
 \label{e4Cube}
\end{equation}
By replacing, for convenience, in the bottom square of the right-hand side 3-cube:
\begin{alignat}{3}
 \bar\f_{0011}&\to-\bar\f_{0011}~, \quad&\quad
 \bar\f_{1001}&\to-\bar\f_{1001}~,\\
 \bar\j_{0001}&\to-\bar\j_{0001}~, \quad&\quad
 \bar\j_{1011}&\to-\bar\j_{1011}~,
\end{alignat}
the Adinkra\eq{e4Cube} becomes:
\begin{equation}
 \vC{
 \begin{picture}(120,55)(3,-5)
 \put(-15,42){\large$\p:\sM^=_{I^4}\to\sM^=_{I^4}$~:}
 \put(-3,-3){\includegraphics[height=50mm]{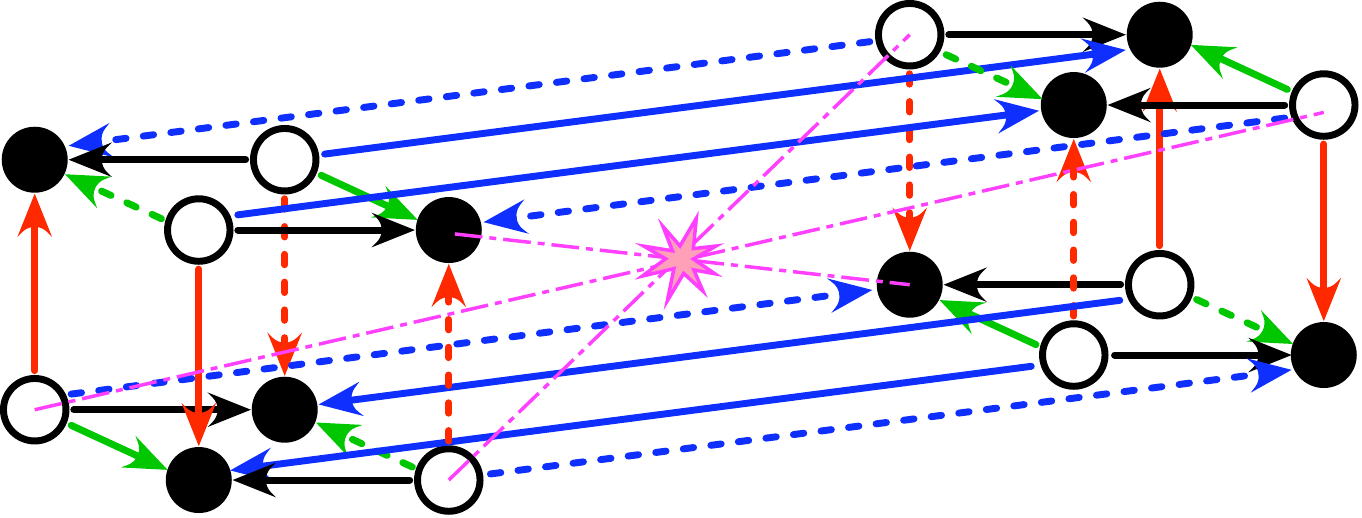}}
 \put(-14,32){$\bar\j_{0100}$}
 \put(-14,8){$\bar\f_{0000}$}
 \put(3,-2){$\bar\j_{0010}$}
 \put(3,21){$\bar\f_{0110}$}
 \put(26,11){$\bar\j_{1000}$}
 \put(29,30.5){$\bar\f_{1100}$}
 \put(44,-2){$\bar\f_{1010}$}
 \put(43,21){$\bar\j_{1110}$}
 \put(71,45){$\bar\f_{0101}$}
 \put(69,22){$-\bar\j_{0001}$}
 \put(84,13){$-\bar\f_{0011}$}
 \put(86,34){$\bar\j_{0111}$}
 \put(111,22){$-\bar\f_{1001}$}
 \put(114,45){$\bar\j_{1101}$}
 \put(129,12){$-\bar\j_{1011}$}
 \put(130,36){$\bar\f_{1111}$}
 \end{picture}}
 \label{e4CubeM}
\end{equation}
where the reflection across the central point marked by the star is a symmetry of the Adinkra, including all the labelings.  For the sake of clarity in this diagram, we have only indicated three of the eight vertex pairs being reflected into each other by the dot-dash purple lines. The full set of reflection pairs is:
\begin{subequations}
\begin{alignat}{7}
 \bar\f_{0000}  &\iff \bar\f_{1111}~, \quad&\quad
 -\bar\f_{0011}  &\iff \bar\f_{1100}~, \quad&\quad
 -\bar\j_{0001} &\iff \bar\j_{1110}~, \quad&\quad
 \bar\j_{0010} &\iff \bar\j_{1101}~,\\
 \bar\f_{0101}  &\iff \bar\f_{1010}~, \quad&\quad
 \bar\f_{1001}  &\iff \bar\f_{0110}~, \quad&\quad
 \bar\j_{0100} &\iff -\bar\j_{1011}~, \quad&\quad
 \bar\j_{1000} &\iff \bar\j_{0111}~.
\end{alignat}
\end{subequations}
It is thus possible to ``orbifold'' the Adinkra\eq{e4CubeM} by identifying, {\em\/e.g.\/}, $\bar\f^\pm_{0000}$ with $\bar\f_{0000}\pm\bar\f_{1111}$, and consequently also identifying each two such vertices and so also the corresponding component fields:
\begin{equation}
  \p~:\quad (p_1,p_2,p_3,p_4) ~\mapsto~ (1-p_1,1-p_2,1-p_3,1-p_4),
  \qquad p_I\in\{0,1\}. \label{eD4Proj}
\end{equation}
Both resulting Adinkras have the topology of (the 1-skeleton of) the projective hypercube in $N=4$:
\begin{equation}
 \vC{
 \begin{picture}(120,40)(-45,0)
 \put(-45,20){$(\sM^=_{I^4}/\ZZ_2)^\pm$~:}
 \put(-2,0){\includegraphics[height=35mm]{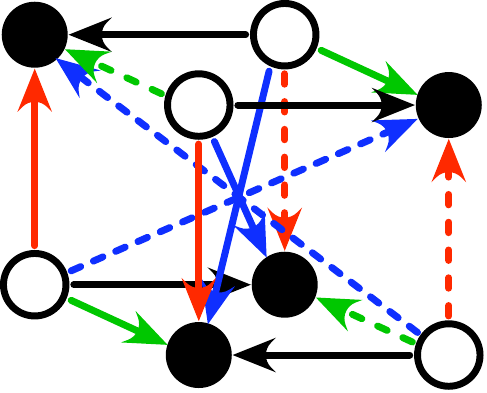}}
 \put(-13,33){$\bar\j^\pm_{0100}$}
 \put(-13,12){$\bar\f^\pm_{0000}$}
 \put(2,2){$\bar\j^\pm_{0010}$}
 \put(3,21){$\bar\f^\pm_{0110}$}
 \put(26,13){$\bar\j^\pm_{1000}$}
 \put(27,33){$\bar\f^\pm_{1100}$}
 \put(42,2){$\bar\f^\pm_{1010}$}
 \put(41,22){$-\bar\j^\pm_{0001}$}
 \end{picture}}
 \label{eRP4}
\end{equation}
where we have defined
\begin{alignat}{3}
 (\sM^=_{I^4}/\ZZ_2)^\pm&: &\quad
  \bF^\pm_{\vec{p}}(\t)&\Defl ~\bF_{\vec{p}}(\t)\pm(-1)^{p_2+p_4}
                           \bF_{\vec1-\vec{p}}(\t)
 \label{ePrCF}
\end{alignat}
for the component fields corresponding to the orbifolded Adinkras.
 As it turns out, the dimensional reduction of the $d=4$, ${\cal N}=1$ chiral supermultiplet has the topology\eq{eRP4}\cite{r6-1} but the height assignments (engineering dimensions) of two of its bosons differ from the ones shown here.
 Note that every boson is connected to every fermion, and {\em\/vice versa\/}.  This becomes clearer in the convention of Ref.\cite{r6-1}, where we place the vertices at a height proportional to the engineering dimensions of the corresponding component fields, so that all arrows point upward:
\begin{equation}
 \vC{
 \begin{picture}(120,45)(-30,-5)
 \put(-25,15){$(\sM^=_{I^4}/\ZZ_2)^\pm$~:}
 \put(2,2){\includegraphics[height=30mm]{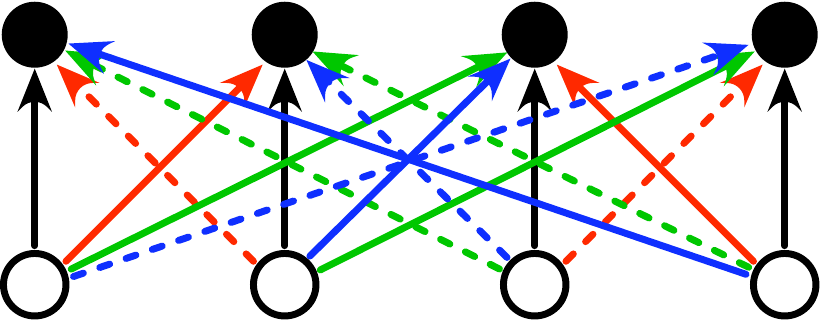}}
 \put(2,35){$\bar\j^\pm_{1000}$}
 \put(26,35){$\bar\j^\pm_{0100}$}
 \put(50,35){$\bar\j^\pm_{0010}$}
 \put(74,35){$-\bar\j^\pm_{0001}$}
 \put(-2,-1){$\bar\f^\pm_{0000}$}
 \put(22,-1){$\bar\f^\pm_{1100}$}
 \put(46,-1){$\bar\f^\pm_{1010}$}
 \put(70,-1){$\bar\f^\pm_{0110}$}
 \end{picture}}
 \label{eB44}
\end{equation}
 In graph theory, this is known as $K(4,4)$, the complete bipartite graph connecting four bosons with four fermions.  Though it may appear at first that there could be further identifications, a closer look at the Adinkra\eq{eB44} reveals that the locations of the dashed lines prevent all further identification; that is, no further projection is consistent with \Eq{eSuSy}.
\end{example}
\Remk\label{r:3}
 A combination of \Eq{ePrCF} and \Eq{eTop2Base} shows that attempting an analogous projection in $\sM^\diamond_{I^4}$ would require the definitions:
\begin{alignat}{3}
 \text{``$(\sM^\diamond_{I^4}/\ZZ_2)^\pm$''}&: &\quad
 F^\pm_{\vec{p}}(\t)
 &:=\ddt^{\lfloor(4-\wt(\vec{p}))/2\rfloor}F_{\vec{p}}(\t)
     \pm(-1)^{p_2+p_4}\ddt^{\lfloor(\wt(\vec{p}))/2\rfloor}F_{\vec1-\vec{p}}(\t).
 \label{ePrCFd}
\end{alignat}
The appearance of the $\ddt$'s indicates that this is not a quotient of the exterior supermultiplet $\sM^\diamond_{I^4}$ itself, but of a multiple {\em\/vertex-raise\/}\cite{r6-1} of $\sM^\diamond_{I^4}$, which is in turn isomorphic to the $\sM^=_{I^4}$. It should be manifest that the differing engineering dimensions of the component fields in $\sM^\diamond_{I^4}$ (and the locality requirements) present the key obstruction to such a $\ZZ_2$ projection in $\sM^\diamond_{I^4}$. This proves that the possibility to construct a local $\ZZ_2$-quotient of a supermultiplet strongly depends on its component fields' engineering degrees: while $\sM^=_{I^4}$ does have a local $\ZZ_2$-quotient, $\sM^\diamond_{I^4}$ does not.

\subsubsection{$N>4$ Projections}
More generally, given an Adinkra, we can sometimes identify vertices, if all labeling of the corresponding vertices and edges matches up. This results is a {\em\/quotient Adinkra\/}. A correspondingly projected supermultiplet is a {\em\/quotient supermultiplet\/}.  We will now focus on quotients of the $N$-cube.  We show in Section~\ref{sQuot} that all connected Adinkra topologies can be obtained in this way.

We start with an Adinkra with the topology of an $N$-cube, the vertices of which are the $N$-tuples $(x_1,\cdots,x_N)\in\{0,1\}^N$.  Suppose that the desired projection, $\p$, is to identify $\vec{0}=(0,\cdots,0)$ with some other vector $\vec{x}=(x_1,\cdots,x_N)$.  Now consider $(1,0,\cdots,0)$, which is connected to $\vec{0}$ by an edge of color $1$.  There is also an edge of the same color, $1$, incident with the vertex at $(x_1,\cdots,x_N)$, and this edge connects $(x_1,\cdots,x_N)$ to $(1-x_1,x_2,\cdots,x_N)$.  If the edges are to match up, $\p$ must also identify $(1,0,\cdots,0)$ with $(1-x_1,x_2,\cdots,x_N)$. Following similarly the edges of all other colors, we can prove by induction that if $(v_1,\cdots,v_N)$ is any vertex of $\{0,1\}^N$, then $\p$ identifies $(v_1,\cdots,v_N)$ with $(x_1+v_1,\cdots,x_N+v_N)\pmod{2}$; we write $\vec{v}\id_\pi\vec{v}\codexor\vec{x}$.  Thus, every projection $\p$ is completely determined by the vertex which it identifies with $\vec{0}$.

More generally, the projection $\p$ may identify more than one vertex with $\vec{0}$.  Suppose $\vec{x}$ and $\vec{y}$ are two such vertices, identified with $\vec{0}$.  Then $\p$ establishes an equivalence relation so that $\vec{0}\id_\p\vec{x}$, and $\vec{x}\id_\p\vec{x}\codexor\vec{y}$.  Thus, $\vec{0}$ is identified with $\vec{x}\codexor\vec{y}$.

Now, we note that the inverse $-\vec{x}$ is equal to $\vec{x}$ modulo 2.  Therefore, the set of vertices that are to be identified with $\vec{0}$ forms a group under component-wise addition modulo 2.  Since we must identify only bosons with bosons and only fermions with fermions, $\wt(\vec{x})$ must be even.  As we will see in \SS~\ref{sQuot}, the weight $\wt(\vec{x})$ must actually be a multiple of 4, or the dashedness of the edges cannot possibly match up.

This yields the following construction:
\begin{enumerate}\itemsep=-3pt\vspace{-3mm}
 \item Start with an Adinkra, $\cA$, with the topology an $N$-cube, $I^N=[0,1]^N$.
 \item Let $G$ be a subgroup of $(\ZZ_2)^N$ consisting only of vectors $\vec{x}$ with $\wt(\vec{x})\equiv 0\pmod4$.
 \item Let $G$ act on the vertex set $\{0,1\}^N$ by component-wise addition modulo 2.
 \item If this preserves the dashing of the edges, then $\cA/G$ is an Adinkra whose topology is $I^N/G$, the corresponding quotient of the $N$-cube.
\end{enumerate}\vspace{-3mm}

\subsection{Finding all Adinkra Chromotopologies}
 \label{sQuot}
Suppose an Adinkra is not connected.  Then the corresponding supermultiplet splits into a direct sum, each component of which corresponds to a connected component of the Adinkra.  Now it is possible for a connected Adinkra to correspond to a direct sum, though it may not be immediately apparent by the Adinkra. This issue will be discussed fully in Ref.\cite{r6-3.2}, where we also specify precisely the conditions under which a supermultiplet may be represented by two topologically distinct Adinkras.  For now, all we need to note is that the problem of classifying Adinkras reduces to classifying connected Adinkras.

As we will see, every connected Adinkra chromotopology is obtained by taking a colored $N$-dimensional cube, then possibly identifying nodes and edges using a doubly even code.  Thus, the topology of an Adinkra, obtained by forgetting the coloring of the edges and vertices, arises from quotienting an (uncolored) $N$-dimensional cube using a doubly even code.  First, some notation.

Suppose we have an Adinkra for $N$-extended supersymmetry, and we consider the chromotopology.  That is, we ignore the arrows on the edges and ignore whether an edge is dashed.  What is left is a vertex set $V=\{v_1,\cdots,v_{2m}\}$, corresponding to all component fields $(F_1\6(\t),\cdots,F_{2m}\6(\t))$,\footnote{Previously, we labeled the bosons $\f_1,\cdots,\f_m$ and the fermions $\j_1,\cdots,\j_m$. Here, it will be convenient for notation to treat them on the same footing.\label{introf}}, the coloring of these vertices, and the edge set $E$ with its coloring.  Since this graph is inherited from a proper Adinkra, for every vertex and every $I\in\{1,\dots,N\}$ there is an edge corresponding to applying $Q_I$ to the field corresponding to that vertex.  Applying this $Q_I$ may involve derivatives, a sign, and/or a factor of $i$, but it will be convenient for now to suppress this.  To this end, we define functions $q_1,\cdots,q_N$ from the vertex set $V=\{v_1,\cdots,v_{2m}\}$ to itself, such that whenever $A$, $B$, and $I$ are such that there is an equation of type (\ref{eQb}) or (\ref{eQf}),
\begin{equation}
  Q_I \, F_A(\t)= c\,\ddt^{\lambda} F_B,\qquad\implies\qquad
  q_I(v_A)=v_B.
\label{e:qAct}
\end{equation}
Notice that in defining $q_I$ from $Q_I$, we are forgetting the coefficients $c=\pm1$, as well as the $\ddt$'s which encode the differences in engineering dimensions.
 The supersymmetry algebra (\ref{eSuSy}) then implies:
\begin{alignat}{3}
  q_I^2&=\Ione,\qquad&\text{\ie}\qquad
  q_I\big(q_I(v)\big)&=v,~\text{for all $v\in V$},\label{e:q2=1}\\[-8mm]
\intertext{and\vspace{-4mm}}
  q_Iq_J&=q_Jq_I,\qquad&\text{\ie}\qquad
  q_I\big(q_J(v)\big)&=q_J\big(q_I(v)\big),~\text{for all $I,J$, and all $v\in V$}.
   \label{e:qiqj}
\end{alignat}

\begin{theorem}
Every connected Adinkra chromotopology is a quotient of a colored $N$-dimensional cube.
\label{T:QuoC}
\end{theorem}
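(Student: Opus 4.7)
The plan is to define a covering map $\Phi\colon\{0,1\}^N\to V$ from the colored $N$-cube onto the chromotopology and then identify the chromotopology with the quotient of the $N$-cube by the group of ``deck transformations'' of $\Phi$. Fix a base vertex $v_0\in V$. Using that $q_I^2=\Ione$ and $q_Iq_J=q_Jq_I$ (Eqs.\eq{e:q2=1}--\eq{e:qiqj}), define
\begin{equation}
\Phi(\vec p):=q_1^{p_1}q_2^{p_2}\cdots q_N^{p_N}(v_0),\qquad \vec p=(p_1,\ldots,p_N)\in\{0,1\}^N.
\end{equation}
Commutativity of the $q_I$ ensures $\Phi$ is a single-valued function, and the relation $q_I\circ\Phi(\vec p)=\Phi(\vec p\codexor\vec e_I)$ shows that $\Phi$ carries each $I$-colored edge of the $N$-cube to an $I$-colored edge of the Adinkra chromotopology. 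Moreover, since each application of a $q_I$ swaps the two halves of the bipartition (edges join vertices of opposite color in the Adinkra), $\Phi$ sends even-weight vertices of the cube to the half of $V$ containing $v_0$ and odd-weight vertices to the opposite half; in particular $\Phi$ respects the vertex bipartition up to a possible overall Klein flip.

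Next I would show $\Phi$ is surjective. By hypothesis the Adinkra chromotopology is connected, so any $v\in V$ can be reached from $v_0$ by traversing a sequence of edges, which translates into writing $v=q_{I_1}q_{I_2}\cdots q_{I_r}(v_0)$ for some (possibly repeated) color indices. Using $q_I^2=\Ione$ to cancel repeats and commutativity to sort the indices into increasing order reduces any such word to the canonical form $q_1^{p_1}\cdots q_N^{p_N}$, so $v=\Phi(\vec p)$ for some $\vec p$, proving surjectivity.

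Let $K:=\{\vec p\in\{0,1\}^N:\Phi(\vec p)=v_0\}$. Since $\Phi(\vec p\codexor\vec q)=q_1^{p_1}\cdots q_N^{p_N}\circ\Phi(\vec q)$, the set $K$ is closed under $\codexor$ (whenever $\Phi(\vec p)=\Phi(\vec q)=v_0$, acting by $\vec p$ fixes $v_0$ so $\Phi(\vec p\codexor\vec q)=v_0$), and since every element of $(\ZZ_2)^N$ is its own inverse, $K$ is a subgroup of $(\ZZ_2)^N$. The translation action of $K$ on $\{0,1\}^N$ permutes the fibers of $\Phi$ transitively: indeed, $\Phi(\vec p)=\Phi(\vec q)$ iff $q_1^{p_1\codexor q_1}\cdots q_N^{p_N\codexor q_N}(v_0)=v_0$, iff $\vec p\codexor\vec q\in K$. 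This identifies the fibers of $\Phi$ with the $K$-orbits and gives a bijection $\{0,1\}^N/K\to V$ which, together with the edge-color compatibility established above, is an isomorphism of colored bipartite graphs. Hence the chromotopology is $I^N/K$, a quotient of the colored $N$-cube.

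The one step that needs care is the well-definedness of $\Phi$: because the $q_I$ are only known to commute pairwise and square to $\Ione$, different factorizations of the same element of $(\ZZ_2)^N$ must yield the same point of $V$, which is precisely what Eqs.\eq{e:q2=1}--\eq{e:qiqj} guarantee. The matching of the vertex bipartition follows automatically (edges connect opposite colors, so each $q_I$ flips the bipartition), but I would flag that at this stage we have only recovered topology and coloring; the constraint that $K$ consist of weight-$\equiv0\pmod 4$ vectors (the doubly even condition) requires the additional dashedness data analyzed in \SS~\ref{cubeproject} and is not part of the present statement.
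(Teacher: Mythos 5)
Your argument is correct and follows essentially the same route as the paper's own proof: fix a base vertex, define the map $\vec p\mapsto q_1^{p_1}\cdots q_N^{p_N}(v_*)$ (well-defined by Eqs.\eq{e:q2=1}--\eq{e:qiqj}), use connectedness for surjectivity, and identify the fiber over the base point as a subgroup of $(\ZZ_2)^N$ whose translation orbits are exactly the fibers, yielding the quotient description. Your closing remark correctly delimits the scope --- the doubly even constraint on the code comes from the dashedness data and is handled in the paper as a separate theorem.
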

\begin{proof}
Suppose we have a connected Adinkra with $N$ edge colors.  Let $V=\{v_1,\cdots,v_{2m}\}$ be the set consisting of all its vertices. Now pick, without loss of generality, any one bosonic vertex $v_*\in V$ and fix it.
 We take the colored $N$-cube $[0,1]^N$, and consider its vertex set $\{0,1\}^N$. We then define a mapping
\begin{equation}
 \p:\left\{ \begin{aligned}
               \{0,1\}^N        &\to V,\\
               (x_1,\cdots,x_N) &\mapsto q_1^{x_1}\big(\cdots q_N^{x_N}(v_*)\big),
             \end{aligned}\right.
 \label{ePrCEff}
\end{equation}
and write $\p(\vec{x}):=q_1^{x_1}\big(\cdots q_N^{x_N}(v_*)\big)$. 

For any $\vec{x}\in\{0,1\}^N$, we apply \Eq{ePrCEff} to $v_*$. Then, for any $I\in\{1,\cdots,N\}$, apply $q_I$ to $\p(\vec{x})$:
\begin{align}
 q_I\big(\p(\vec{x})\big)
 &=q_I\,\Big(q_1^{x_1}\big(\cdots q_N^{x_N}(v_*)\big)\Big)
  =q_1^{x_1}\Big(\cdots\,q_I^{1-x_I}\big(\cdots\,q_N^{x_N}(v_*)\big)\Big),\label{eQIout}\\[2mm]
 &= \p(\vec{x}\,\codexor\,\vec{e}_I),\qquad
  \forall v_*\in V,~\forall\vec{x}\in\{0,1\}^N,\label{eQIin}
\end{align}
where we have applied \Eq{e:qiqj} to commute $q_I$ through the other $q$'s in \Eq{eQIout} and so obtain \Eq{eQIin}, using the notation\eq{ep+eI} for the component-wise ``exclusive or'' operation.

Thus, each pair of vertices ($\p(\vec{x}),\p(\vec{x}\codexor\vec{e}_I))\in V\times V$ is connected by an edge, labeled by $I$, in the Adinkra.  Since there is an edge labeled by $I$ connecting $\vec{x}$ with $(\vec{x}\codexor\vec{e}_I)$ in the $N$-cube, and since all edges in the cube $[0,1]^N$ are of this form, the map\eq{ePrCEff} induces a map $\p_E$ that maps edges colored $I$ of $[0,1]^N$ to edges colored $I$ in the Adinkra.

To see that $\p$ is surjective, first observe that the Adinkra is connected.  That is, every vertex $v\in V$ is connected, via a path of edges, to the fixed $v_*\in V$. In the Adinkra, these edges have colors, forming a sequence, $I_1,\cdots,I_k$, when tracing from the vertex $v$ to $v_*$. If we then apply to $v_*$ a corresponding sequence of $q_I$'s, we get:
\begin{equation}
  q_{I_1}\big(\cdots q_{I_k}(v_*)\big)=v.
\end{equation}
Using the commutativity of the $q$'s to put them in numerical order and \Eq{e:q2=1} to eliminate those that appear more than once, we can write this as
\begin{equation}
 \p(\vec{x}) = q_1^{x_1}\big(\cdots q_N^{x_N}(v_*)\big) = v.
\end{equation}

The fact that $\p$ sends bosons to bosons and fermions to fermions can be seen by the fact that it sends the boson $(0,\dots,0)\in\{0,1\}^N$ to the boson $v_*$, and the fact that every vertex in $V$ is connected to $v_*$ by a sequence of edges, each of which alternates between bosons and fermions.

Let $G=\p^{-1}(v_*)$, be the collection of points $\vec{x}\in\{0,1\}^N$, which \Eq{ePrCEff} maps to $v_*$.  Being a subset of $\{0,1\}^N$, $G$ may be interpreted as a subset of $(\ZZ_2)^N$.
 
We will now show $G$ is a subgroup of $(\ZZ_2)^N$.  In fact, as we will see, $G$ is the group such that $[0,1]^N/G$ is the chromotopology of the Adinkra in question.  Trivially, $\vec0\in G$.  Inverses exist since every element is its own inverse in $(\ZZ_2)^N$.  Finally, let $\vec{x}$ and $\vec{y}$ be elements of $G$.  Compose $q_1^{x_1}\dots q_N^{x_N}$ with $q_1^{y_1}\dots q_N^{y_N}$.  The composition sends $v_*$ to itself, and we can commute the $q_I$ past each other to write $q_1^{x_1+y_1}\cdots q_N^{x_N+y_N}$.  Thus,
\begin{equation}
q_1^{x_1+y_1}\cdots q_N^{x_N+y_N}(v_*)=v_*.
\end{equation}
Since $q_I^2=1$, we can reduce the exponents of these modulo 2, and thus, $\pi(\vec{x}\codexor\vec{y})=v_*$.  Thus, $\vec{x}\codexor\vec{y}\in G$.  Therefore, $G$ is a group.

We now define a bijection between $\{0,1\}^N/G$ and the vertex set $V$ of the Adinkra.  The function $\pi:\{0,1\}^N\to V$ has the property that for all $\vec{x}\in G$, $\pi(\vec{y}+\vec{x})=\pi(\vec{y})$.  Thus it is well-defined to define a function on the cosets of $G$, $f:\{0,1\}^N/G\to V$, so that $f(\vec{y}+G)=\pi(\vec{y})$.

The function $f$ is one-to-one, since if $\pi(\vec{y})=\pi(\vec{z})$, we would have
\begin{equation}
q_1^{y_1}\dots q_N^{y_N}(v_*)=q_1^{z_1}\dots q_N^{z_N}(v_*),
\end{equation}
and this would imply
\begin{equation}
q_1^{y_1+z_1}\dots q_N^{y_N+z_N}(v_*)=v_*,
\end{equation}
so that $\vec{y}\codexor \vec{z}\in G$.
The function $f$ is onto, since $\p$ is, as was shown above.  Thus, $f$ is a bijection on vertices.  The argument that $\p$ sends edges to edges shows that the edges are similarly in bijection.
We therefore see that $G$ is equal to the group of identifications on $\{0,1\}^N$, and the orbit space $\{0,1\}^N/G$ is the Adinkra.  The map $\pi$ realizes this quotient.
We can use $\pi$ to pull back all the labels from the Adinkra to $\{0,1\}^N$, and in this way, we have a cubical Adinkra the quotient of which by $G$ is the Adinkra in question.
\end{proof}

We next consider what kinds of groups $G$ can be used to quotient a cubical Adinkra.  First, note that $G$ is a subgroup of $(\ZZ_2)^N$.  As we saw in Section~\ref{s:C}, $G$ is then a binary linear code. In the analysis to far, we have used only the edge-coloring in the Adinkra.

Remembering also the vertex-coloring will force $G$ to be even, that is, for each $\vec{x}\in G$, $\wt(\vec{x})$ must be even.  As we will see shortly, remembering the edge-dashedness will force $G$ to be {\em\/doubly\/} even, that is, for each $\vec{x}\in G$, $\wt(\vec{x})$ must be a multiple of 4.

\begin{theorem}
If an Adinkra is a quotient of an $N$-cube by a group $G$, then $G$ is a doubly even binary linear code.
\end{theorem}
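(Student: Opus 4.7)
The plan exploits the one piece of Adinkra structure that has not yet entered the proof: the edge dashedness. In any Adinkra the relation $\{Q_I,Q_J\}=0$ for $I\neq J$ translates into the constraint that the product of the four edge signs around every $2$-colored square equals $-1$.

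Fix $\vec x\in G$ with $w=\wt(\vec x)$ and support $I_1<I_2<\cdots<I_w$; since $G$ is already even (vertex bipartition), $w$ is even, so only $w\equiv 2\pmod 4$ needs excluding. It suffices to treat the sub-quotient by $\langle\vec x\rangle$, since any consistent quotient by $G$ induces one by each cyclic subgroup. A component field of this sub-quotient is an orbit combination
\begin{equation*}
F^+_{[\vec p]}\,=\,F_{\vec p}+\sigma(\vec p)\,F_{\vec p\,\codexor\,\vec x},
\end{equation*}
for a sign function $\sigma:\{0,1\}^N\!\to\{\pm1\}$. Demanding $Q_I F^+_{[\vec p]}\propto F^+_{[\vec p\,\codexor\,\vec e_I]}$ forces the recursion
\begin{equation*}
\sigma(\vec p\,\codexor\,\vec e_I)\,=\,\sigma(\vec p)\,\frac{c(\vec p\,\codexor\,\vec x,\,I)}{c(\vec p,\,I)},
\end{equation*}
where $c(\vec p,I)\in\{\pm1\}$ is the dashedness of the $I$-colored edge at $\vec p$ in the chosen $N$-cubical Adinkra. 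Well-definedness of $\sigma$ on the orbit $\{\vec p,\vec p\,\codexor\,\vec x\}$ additionally requires $\sigma(\vec p\,\codexor\,\vec x)=\sigma(\vec p)$.

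Iterating the recursion along the ``staircase'' $\vec 0\to\vec e_{I_1}\to\vec e_{I_1}+\vec e_{I_2}\to\cdots\to\vec x$, with $\vec p_j:=\vec e_{I_1}+\cdots+\vec e_{I_j}$, and combining with $\sigma(\vec x)=\sigma(\vec 0)$, produces the key identity
\begin{equation*}
\prod_{j=1}^{w}\frac{c(\vec p_{j-1}\,\codexor\,\vec x,\,I_j)}{c(\vec p_{j-1},\,I_j)}\,=\,1.
\end{equation*}
The next step is to evaluate the left-hand side. Using the explicit admissible dashing $c(\vec p,I):=(-1)^{\wt(\vec p<I)}$ (which one checks directly gives $-1$ holonomy around every $2$-colored square), each factor of the product evaluates to $(-1)^{j-1}$, so the product is $(-1)^{\binom{w}{2}}$. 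This value is intrinsic: any two admissible dashings differ by a vertex coboundary $c(\vec p,I)\mapsto c(\vec p,I)\,g(\vec p)\,g(\vec p\,\codexor\,\vec e_I)$, whose contribution telescopes harmlessly out of the product.

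Consequently consistency forces $(-1)^{\binom{w}{2}}=1$, i.e.\ $\binom{w}{2}$ even, which with $w$ even yields $w\equiv 0\pmod 4$. Since this holds for every $\vec x\in G$, $G$ is doubly even. The main technical obstacle is the cocycle computation above together with its dashing-invariance --- careful sign-bookkeeping is the principal subtlety, but both reductions are short.
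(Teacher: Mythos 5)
Your proof is correct, but it takes a genuinely different route from the paper's. The paper works directly inside the quotient supermultiplet: since $\vec x\in G$ gives a closed loop, $Q_1^{x_1}\cdots Q_N^{x_N}F_1=C\,\ddt^{\wt(\vec x)/2}F_1$, and applying this operator twice while anticommuting the $Q$'s yields $C^2=(-1)^{\binom{\wt(\vec x)}{2}}i^{\wt(\vec x)}=1$; separately, the alternation between Eqs.\eq{eQb} and\eq{eQf} forces $C\in i^{\wt(\vec x)/2}\{\pm1\}$, and the clash between $C^2=1$ and $C=\pm i$ kills the case $\wt(\vec x)\equiv2\pmod 4$. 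You instead lift to the cubical Adinkra, model the quotient as a sign-twisted identification $F_{\vec p}\sim\sigma(\vec p)F_{\vec p\,\codexor\,\vec x}$, and extract the obstruction as a $\ZZ_2$-holonomy: the consistency of $\sigma$ around the staircase forces a product of edge-sign ratios to equal $1$, which you evaluate as $(-1)^{\binom{w}{2}}$ using the reference dashing $(-1)^{\wt(\vec p<I)}$ and then transport to the actual dashing via the coboundary argument. I checked the key steps: the reference dashing does have holonomy $-1$ on every $2$-colored square, each staircase factor is indeed $(-1)^{j-1}$, and the telescoping of $g(\vec p)g(\vec p\,\codexor\,\vec e_I)$ is clean. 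The one point you lean on implicitly is that any two admissible dashings differ by a coboundary, i.e.\ that a flat $\ZZ_2$-sign assignment on the cube's $1$-skeleton whose square holonomies are all trivial is a vertex coboundary; this holds because the $2$-skeleton of $I^N$ is simply connected, and you should say so. What each approach buys: the paper's argument is shorter, needs no reference dashing and no lifting, and gets both the sign $(-1)^{\binom w2}$ and the factor $i^{w/2}$ essentially for free from the operator algebra; yours cleanly isolates the dashing as the sole source of the mod-$4$ obstruction (the mod-$2$ part coming from the bipartition), makes the gauge-invariance of the obstruction explicit, and adapts transparently to the complex case treated in Appendix~\ref{app:complex}, where the $i$-bookkeeping changes and the condition relaxes to even self-orthogonal codes.
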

\Remk
The converse is also true, that is if $G$ is a doubly even binary linear code, then there is a family of adinkraic supermultiplets, the chromotopology of the Adinkra of which is a quotient by $G$ of a colored $N$-cube. In order to prove this fact, we must first discuss the relationship between supersymmetry and Clifford algebras, which will be treated in the sequel, Ref.\cite{r6-3.2}, in which for every doubly even code, $G$, we construct a family of adinkraic supermultiplets with the $\{0,1\}^N/G$ chromotopology.

\begin{proof}
Recall from Section~\ref{s:C} that a binary linear code is a subgroup of $(\ZZ_2)^N$, and that it is called doubly even if every element of it has weight a multiple of 4.  So we need to prove that for every $x\in G$, $\wt(x)\equiv 0 \pmod{4}$.

To this end, the $q_I$'s no longer suffice, and we will need to use the $Q_I$'s; in particular, we need to ``remember'' the scaling constants $c$ encoding the edge-dashedness in an Adinkra, and the equipartition of the vertices into bosonic and fermionic ones. 

The statement that $\vec{x}\in G$ means that
\begin{equation}
 \p(\vec{x})=q_1^{x_1}\big(\cdots q_N^{x_N}(v_*)\big)=v_*,
\end{equation}
and thus
\begin{equation}
 Q_1^{x_1}\cdots Q_N^{x_N}\, F_1(\t) = C\,\ddt^{\,\wt(\vec{x})/2}\,F_1(\t)
 \label{eBack*}
\end{equation}
for some complex number $C$. The exponent of $\ddt$ follows simply from comparing engineering dimensions of the left-hand side and the right-hand side.

Since this sequence of $Q_I$ operators, corresponding to a closed path in the Adinkra, must send bosons to bosons and fermions to fermions, it must be that $\wt(\vec{x})$ is even, so that $\inv2\wt(\vec{x})$ is indeed integral and \Eq{eBack*} is well-defined.

Applying $Q_1^{x_1}\cdots Q_N^{x_N}$ twice to $F_1(\t)$, we find:
\begin{align}
 Q_1^{x_1}\cdots Q_N^{x_N}\cdot Q_1^{x_1}\cdots Q_N^{x_N} F_1(\t)
  &= C^2 \ddt^{2\L}F_1(\t). \label{eCC0}
\intertext{On the left side, using the supersymmetry algebra\eq{eSuSy}, we can anti-commute the $Q_I$ past each other. Rearrange these to regroup the result into}
 &=(-1)^{\wt(\vec{x})\choose2}Q_1^{2x_1}\cdots Q_N^{2x_N}F_1(\t), \label{eCC1}
\intertext{which, using the supersymmetry algebra\eq{eSuSy} again), becomes}
 &=(-1)^{\wt(\vec{x})\choose2}\,i^{\wt(\vec{x})} \,\ddt^{\wt(\vec{x})} F_1(\t).
  \label{eCC2}
\end{align}
Comparing the exponents of $\ddt$ in \Eq{eCC0} with that in \Eq{eCC2} confirms \Eq{eBack*}.
Comparing the numerical coefficients produces:
\begin{align}
 C^2=(-1)^{\wt(\vec{x})\choose2}\,i^{\wt(\vec{x})}
    =i^{\wt(\vec{x})(\wt(\vec{x})-1)+\wt(\vec{x})}
    =i^{\left(\wt(\vec{x})^2\right)}.
\end{align}
Since $\wt(\vec{x})$ is even, we know that $\wt(\vec{x})^2$ is a multiple of 4.  Thus, the left hand side of this is 1.

Using (\ref{eQb}) and (\ref{eQf}) repeatedly, we see that $C$ is $\pm 1$ if $\wt(\vec{x})=0\pmod4$, and $\pm i$ if $\wt(\vec{x})=2\pmod4$. But $C^2=1$ implies that $C=\pm 1$, and this can happen only if $\wt(\vec{x})=0\pmod4$.
\end{proof}

\section{Codes, Again}\label{s:codes-redux}
\subsection{Examples of Doubly-Even Codes}
 \label{s:deC}
Since doubly even codes classify chromotopologies, it is useful to consider a few examples of such codes.  For each $N$ there is a trivial doubly even code $\{00\cdots 0\}$ with one element, which we call $t_N$; its generating set is the empty set. In addition, when $N=4$, there is a code $\{0000,1111\}$, called $d_4$.  The generating set is $\{1111\}$.  More generally, for every even $N\ge 4$, there is a doubly even code called $d_N$, of length $N$ and with $\frac{N}{2}-1$ generators, with generating set
\begin{equation}
\begin{bmatrix}
1\,1\,1\,1\,0\,0\,0\,0\,0\,\cdots\,0\,0\,0\,0\,0\\[-1mm]
0\,0\,1\,1\,1\,1\,0\,0\,0\,\cdots\,0\,0\,0\,0\,0\\[-1mm]
0\,0\,0\,0\,1\,1\,1\,1\,0\,\cdots\,0\,0\,0\,0\,0\\[-1mm]
 \qquad\quad\vdots\\[-1mm]
0\,0\,0\,0\,0\,0\,0\,0\,0\,\cdots\,0\,1\,1\,1\,1
\end{bmatrix}.
\end{equation}
Note that this is a description of the generating set, so that the actual code has more codewords, including the null-vector and all those constructed by adding (bitwise, modulo 2) any number of these generators together. For example,
\begin{equation}
 \begin{bmatrix}
 1\,1\,1\,1\,0\,0\,0\,0\\[-1mm]
 0\,0\,1\,1\,1\,1\,0\,0\\[-1mm]
 0\,0\,0\,0\,1\,1\,1\,1\\[-1mm]
 \end{bmatrix}
\qquad\text{generates}\qquad
 \left\{\begin{array}{rc}
        v_0=&0\,0\,0\,0\,0\,0\,0\,0\\[-1mm]
        v_1=&1\,1\,1\,1\,0\,0\,0\,0\\[-1mm]
        v_2=&0\,0\,1\,1\,1\,1\,0\,0\\[-1mm]
        v_3=&0\,0\,0\,0\,1\,1\,1\,1\\[-1mm]
        v_1\codeplus v_2=&1\,1\,0\,0\,1\,1\,0\,0\\[-1mm]
        v_2\codeplus v_3=&0\,0\,1\,1\,0\,0\,1\,1\\[-1mm]
        v_1\codeplus v_3=&1\,1\,1\,1\,1\,1\,1\,1\\[-1mm]
        v_1\codeplus v_2\codeplus v_3=&1\,1\,0\,0\,0\,0\,1\,1\\[-1mm]
        \end{array}\right\}.
 \label{ed8}
\end{equation}
Note that the same code, on the right-hand side of the display\eq{ed8}, is just as well generated by $\{v_1,v_2,(v_1\codeplus v_3)\}$ and several other choices. For general $N$, the $d_N$ code contains $2^{\frac{N}2-1}$ codewords.

When $N$ is congruent to $7$ or $8$ modulo $8$, there is an important doubly even code called $e_N$, the generating set of which is that of $d_N$ (or $t_1\oplus d_{N-1}$ when $N\equiv 7 \pmod{8})$ augmented by an additional generator of the form $101010\cdots$.  For instance,
\begin{equation}
e_7\,:~
\begin{bmatrix}
1\,1\,1\,1\,0\,0\,0\\[-1mm]
0\,0\,1\,1\,1\,1\,0\\[-1mm]
1\,0\,1\,0\,1\,0\,1
\end{bmatrix},\qquad
e_8\,:~
\begin{bmatrix}
1\,1\,1\,1\,0\,0\,0\,0\\[-1mm]
0\,0\,1\,1\,1\,1\,0\,0\\[-1mm]
0\,0\,0\,0\,1\,1\,1\,1\\[-1mm]
1\,0\,1\,0\,1\,0\,1\,0
\end{bmatrix},
\label{eN}
\end{equation}
and we then write:
\begin{equation}
e_{15}\,:~
\begin{bmatrix}
1\,1\,1\,1\,0\,0\,0\,0\,0\,0\,0\,0\,0\,0\,0\\[-1mm]
0\,0\,1\,1\,1\,1\,0\,0\,0\,0\,0\,0\,0\,0\,0\\[-1mm]
0\,0\,0\,0\,1\,1\,1\,1\,0\,0\,0\,0\,0\,0\,0\\[-1mm]
0\,0\,0\,0\,0\,0\,1\,1\,1\,1\,0\,0\,0\,0\,0\\[-1mm]
0\,0\,0\,0\,0\,0\,0\,0\,1\,1\,1\,1\,0\,0\,0\\[-1mm]
0\,0\,0\,0\,0\,0\,0\,0\,0\,0\,1\,1\,1\,1\,0\\[-1mm]
1\,0\,1\,0\,1\,0\,1\,0\,1\,0\,1\,0\,1\,0\,1
\end{bmatrix},\qquad
e_{16}\,:~
\begin{bmatrix}
1\,1\,1\,1\,0\,0\,0\,0\,0\,0\,0\,0\,0\,0\,0\,0\\[-1mm]
0\,0\,1\,1\,1\,1\,0\,0\,0\,0\,0\,0\,0\,0\,0\,0\\[-1mm]
0\,0\,0\,0\,1\,1\,1\,1\,0\,0\,0\,0\,0\,0\,0\,0\\[-1mm]
0\,0\,0\,0\,0\,0\,1\,1\,1\,1\,0\,0\,0\,0\,0\,0\\[-1mm]
0\,0\,0\,0\,0\,0\,0\,0\,1\,1\,1\,1\,0\,0\,0\,0\\[-1mm]
0\,0\,0\,0\,0\,0\,0\,0\,0\,0\,1\,1\,1\,1\,0\,0\\[-1mm]
0\,0\,0\,0\,0\,0\,0\,0\,0\,0\,0\,0\,1\,1\,1\,1\\[-1mm]
1\,0\,1\,0\,1\,0\,1\,0\,1\,0\,1\,0\,1\,0\,1\,0
\end{bmatrix},
\end{equation}
and so on.

These are famous codes: $e_7$ is known as the Hamming $[7,3]$ code, and $e_8$ is the parity-extended Hamming code. Ref.\cite{rCPS}, describes a so-called ``Construction A'', which determines a lattice as a subset of $\ZZ^N$ of all the points whose coordinates modulo $2$ are in the code, and under this, we form the famous lattices $e_7$ and $e_8$.  The points that are of closest distance to the origin form the root lattice for the Lie algebras $E_7$ and $E_8$, respectively. This correspondence is in fact more general\cite{rNEConstrA}, and may in particular also used to reconstruct the root lattices of the $D_N$ Lie algebras.

Besides the trivial doubly even code $t_N$, $\{000\cdots 0\}$, for any $N\equiv0\pmod4$, there is an $[N,1]$ doubly even code $h_N$ consisting of $\{000\cdots 0,111\cdots 1\}$, the generating set of which is $\{111\cdots 1\}$.\Ft{This is the only code mentioned here not specifically named in Ref.\cite{rCHVP}.} Note that $h_4=d_4$, but $h_N\subset d_N$ for $N=8,12,16,\dots$.

There are many other doubly even codes, and the number grows quickly as $N$ becomes large; see Appendix~\ref{app:RM} and Refs.\cite{rBilRees,rCPS}.

\subsection{Forgetting the Color of Edges, Permutation Equivalence, and $R$-Symmetry}
It is also possible to permute the columns in a code.  For instance, for $e_7$ we might swap the last two columns and obtain a generating set
\begin{equation}
\begin{bmatrix}
1\,1\,1\,1\,0\,0\,0\\
0\,0\,1\,1\,1\,0\,1\\
1\,0\,1\,0\,1\,1\,0
\end{bmatrix}.
\end{equation}
This is another doubly even code, and it is different from $e_7$ as given in \Eq{eN}.  To verify this, one could write the 8 codewords in both cases and compare.
More generally, any permutation of columns of a code will produce another code, which is sometimes the same code, sometimes not.

An example where a column-permutation results in precisely the same code again can be seen by taking the $e_7$ generating set\eq{eN}, and swapping the first and third column, then swapping the second and fourth column.  The result would be:
\begin{equation}
\begin{bmatrix}
1\,1\,1\,1\,0\,0\,0\\
0\,0\,1\,1\,1\,1\,0\\
1\,0\,1\,0\,1\,0\,1
\end{bmatrix}
\quad\longrightarrow\quad
\begin{bmatrix}
1\,1\,1\,1\,0\,0\,0\\
1\,1\,0\,0\,1\,1\,0\\
1\,0\,1\,0\,1\,0\,1
\end{bmatrix}.
\end{equation}
The result does not look like the $e_7$ generating set, but it generates the same code.  Indeed, replace the second generator with the sum of the first and the second generator, and we recover exactly the original generator set for $e_7$.

Yet all the so-obtained codes are in some sense similar, and we call codes related in this way {\em\/permutation-equivalent}.  Note the distinction between code equality and their permutation equivalence, which is weaker than equality.  It is convenient for the classification and naming of codes to give one name for the permutation equivalence class, and recognize the multiplicity of codes that the name represents.

Since the columns of a code correspond to the various $Q_I$, a permutation of the columns of the code corresponds to a permutation of the $Q_I$, \ie, to an $R$-symmetry.  For real $N$-extended supersymmetry, the group of $R$-symmetries is $O(N)$; the permutation equivalences describe the subgroup of this matrix group consisting of permutation matrices.  Though this might suggest that the physically relevant question is permutation equivalence of codes, this is not necessarily so: It may well be possible to construct a theory with two types of supermultiplets, corresponding to two different but permutation-equivalent codes, coupled in a way that precludes rewriting the same theory in terms of only one type of supermultiplet. Although different in technical detail, the inextricable coupling of chiral and twisted-chiral supermultiplets discovered in Ref.\cite{rGHR} is a conceptual paradigm of this possibility.

The columns also correspond to the colors of the Adinkra, so permutation equivalence classes give rise to Adinkra topologies (without the edge colors).  This raises the question: do permutation equivalence classes of doubly even codes classify connected Adinkra topologies?  Certainly we have just described a map from the set of permutation equivalence class of doubly even codes to the set of connected Adinkra topologies.  And certainly this map is surjective.  But is it injective?  That is, is it possible that two non-equivalent doubly even codes will give rise to the same Adinkra topology?  The answer to this question is not clear, but luckily, in trying to classify Adinkras, we can leapfrog the issue of classifying Adinkra topologies and instead use the classification of Adinkra chromotopologies, where the issue is clear.

The set of column-permutations that do not change the (complete) code forms a group, called the automorphism group of the code, $\Aut(G)$.  The number of codes permutation-equivalent to $G$ is then $N!/|\Aut(G)|$, and is regarded the ``mass'' of the code.  Below is a table listing the number of elements in $\Aut(G)$.  Note that $d_4$, $e_7$ and $e_8$ do not fit the pattern for the other $d_N$ or $e_N$.

\begin{table}[tb]
  \centering
\begin{tabular}{l|l}
\boldmath$G$&\boldmath$|{\bf Aut}(G)|$\\\hline
$t_N$&$N!$\\
$d_4$&$24$\\
$d_{2m}$, $m>2$&$2^{m-1}m!$\\
$e_7$&168\\
$e_8$&1344\\
$e_{8m-1}$, $m>1$&$2^{4m-1}(4m-1)!$\\
$e_{8m}$, $m>1$&$2^{4m-1}(4m)!$\\
$h_{4m}$, $m>1$&$(4m)!$\\
\end{tabular}
  \caption{Number of elements in $\Aut(G)$}
  \label{t:known}
\end{table}

It turns out to be possible to independently determine the total number of $[N,k]$ codes of various kinds, including those we need here. This total number must equal the sum of all ``mass''-contributions, $N!/|\Aut(G_i)|$, of all $[N,k]$ codes, $G_i$. Such sum rules are called ``mass formulae'', and
 Gaborit\cite{rPGMass} provides formulae to obtain all that we will need.  These formulae are written by cases according to the congruence class of $N$ modulo 8, and are given in Appendix~\ref{app:RM}.

These numbers, even for moderate $N$ such as $N=11$, are intimidating.  Nevertheless, in some cases, these are due to the many permutation-equivalent codes.  Table~\ref{t:G2} provides a listing of permutation equivalence classes for $N$ up to $11$, obtained by combining the results from Table~\ref{t:known} and accounting for permutations.

Beyond this, however, the number of permutation equivalence classes is still large: see Table~\ref{t:G3} and Appendix~\ref{app:RM} for details.  It is plainly impossible to list all the doubly even $[N,k]$-codes for $N\leq32$ in journal publication.
\begin{table}[ht]
\begin{center}
{\small
\begin{tabular}{l|rr|rr|rr|rr|rr|}
\boldmath$N$&\multicolumn{2}{c|}{\boldmath$k=0$}&\multicolumn{2}{c|}{\boldmath$k=1$}
            &\multicolumn{2}{c|}{\boldmath$k=2$}&\multicolumn{2}{c|}{\boldmath$k=3$}
            &\multicolumn{2}{c|}{\boldmath$k=4$}\\\hline
4&$t_4$&(1)&$d_4$&(1)\\\hline
5&$t_5$&(1)&$t_1\oplus d_4$&(5)\\\hline
6&$t_6$&(1)&$t_2\oplus d_4$&(15)&$d_6$&(15)\\\hline
7&$t_7$&(1)&$t_3\oplus d_4$&(35)&$t_1\oplus d_6$&(105)&$e_7$&(30)\\\hline
8&$t_8$&(1)&$t_4\oplus d_4$&(70)&$t_2\oplus d_6$&(420)&$t_1\oplus e_7$&(240)&
$e_8$&(30)\\
&&&$h_8$&(1)&$d_4\oplus d_4$&(35)&$d_8$&(105)&&\\\hline
9&$t_9$&(1)&$t_5\oplus d_4$&(126)&$t_3\oplus d_6$&(1260)&$t_2\oplus e_7$&(1080)&
$t_1\oplus e_8$&(270)\\
&&&$t_1\oplus h_8$&(9)&$t_1\oplus d_4\oplus d_4$&(315)&$t_1\oplus d_8$&(945)&&\\\hline
10&$t_{10}$&(1)&$t_6\oplus d_4$&(210)&$t_4\oplus d_6$&(3150)&$t_3\oplus e_7$&(3600)&$t_2\oplus e_8$&(1350)\\
&&&$t_2\oplus h_8$&(45)&$t_6 * d_6$&(630)&$d_4\oplus d_6$&(3150)&$d_{10}$&(945)\\
&&&&&$t_2\oplus d_4\oplus d_4$&(1575)&$t_2\oplus d_8$&(4725)&&\\\hline
11&$t_{11}$&(1)&$t_7\oplus d_4$&(330)&$t_5\oplus d_6$&(6930)&$t_4\oplus e_7$&(9900)&
$t_3\oplus e_8$&(4950)\\
&&&$t_3\oplus h_8$&(165)&$t_1 \oplus t_6 * d_4$&(6930)&$t_1\oplus d_4\oplus d_6$&(34650)&$t_1\oplus d_{10}$&(10395)\\
&&&&&$t_3\oplus d_4\oplus d_4$&(5775)&$t_3\oplus d_8$&(17325)&$d_4\oplus e_7$&(9900)\\
&&&&&&&$t_5*d_6$&(13860)&&\\\hline
\end{tabular}}
\end{center}
\caption{A listing of permutation equivalence classes for $N$ up to $11$, with the number of codes in the permutation equivalence class given in in the parentheses. Here, $\oplus$ denotes a vector space direct sum, so that if $U \subset (\ZZ_{2})^{N}$ and $V \subset (\ZZ_{2})^{M}$, then $U\oplus V \subset (\ZZ_{2})^{N} \oplus (\ZZ_{2})^{M}\cong (\ZZ_{2})^{N+M}$. The notation $t_{M}*C$ denotes the direct sum together with at least one additional ``glue'' codeword extending into the $t_{M}$ summand, similar to how a $e_{N}$ code is constructed from the corresponding $d_{N}$ code.}
\label{t:G2}
\end{table}
\begin{table}[ht]
\begin{center}\scriptsize
\begin{tabular}{r|r|r|r|r|r|r|r|r|r|r|r|r|r|r|r|r}
\hbox to0pt{\hss\small\boldmath$_N\!\backslash\!^k$}\kern-2pt
 &\bf1&\bf2&\bf3&\bf4&\bf5&\bf6&\bf7&\bf8&\bf9&\bf10&\bf11&\bf12&\bf13&\bf14&\bf15&\bf16\\[1mm]\hline\hline
 \bf 4 & 1\\\hline
 \bf 5 & 1\\\hline
 \bf 6 & 1& 1\\\hline
 \bf 7 & 1& 1& 1\\\hline
 \bf 8 & 2& 2& 2& 1\\\hline
 \bf 9 & 2& 2& 2& 1\\\hline
 \bf10 & 2& 3& 3& 2\\\hline
 \bf11 & 2& 3& 4& 3\\\hline
 \bf12 & 3& 5& 7& 7& 2\\\hline
 \bf13 & 3& 5& 8& 8& 4\\\hline
 \bf14 & 3& 7& 12& 14& 9& 4\\\hline
 \bf15 & 3& 7& 15& 20& 15& 8& 2\\\hline
 \bf16 & 4& 10& 23& 38& 36& 23& 9& 2\\\hline
 \bf17 & 4& 10& 25& 45& 50& 34& 14& 3\\\hline
 \bf18 & 4& 13& 34& 72& 94& 79& 35& 9\\\hline
 \bf19 & 4& 13& 40& 94& 146& 141& 75& 19\\\hline
 \bf20 & 5& 17& 57& 158& 295& 353& 231& 84& 10\\\hline
 \bf21 & 5& 17& 63& 194& 439& 629& 494& 198& 38\\\hline
 \bf22 & 5& 21& 83& 298& 812& 1481& 1465& 740& 187& 25\\\hline
 \bf23 & 5& 21& 95& 387& 1287& 2970& 3811& 2362& 714& 119& 11\\\hline
 \bf24 & 6& 27& 129& 607& 2444& 7287& 12395& 10048 & 3710 & 739 & 94 & 9\\\hline
 \bf25 & 6& 27& 141& 755& 3808& 15177& 35916 & 38049 & 16039 & 2973 & 309 & 22 \\\hline
 \bf26 & 6& 32& 180& 1114& 6923& 37455& 128270& 194626& 103527& 20206& 1829& 103 \\\hline
 \bf27 & 6& 32& 202& 1435& 11320& 86845 & * & * & * & 174809& 13578& 525 \\\hline
 \bf28 & 7& 39& 263& 2136& 20812& * & * & * & * & * & * & 7402& 151 \\\hline
 \bf29 & 7& 39& 287& 2693& 34233& * & * & * & * & * & * & * & 1940 \\\hline
 \bf30 & 7& 46& 359& 3866& * & * & * & * & * & * & * & * & * & 731 \\\hline
 \bf31 & 7& 46& 400& 4972& * & * & * & * & * & * & * & * & * & * & 210 \\\hline
 \bf32 & 8& 55& 506& * & * & * & * & * & * & * & * & * & * & * & * & 85 \\\hline\hline
\hbox to0pt{\hss\small\boldmath$^N\!/\!_k$}\kern-2pt\vrule width0pt height3ex
 &\bf1&\bf2&\bf3&\bf4&\bf5&\bf6&\bf7&\bf8&\bf9&\bf10&\bf11&\bf12&\bf13&\bf14&\bf15&\bf16\\\end{tabular}
\end{center}
\caption{Number of distinct permutation classes of doubly even $[N,k]$ codes. The ``\,*\,'' entry indicates codes that are still being enumerated; see {\small\tt http://www.rlmiller.org/de\_codes/} for up-to-date results, including links to listings of the actual codes.}
\label{t:G3}
\end{table}

\subsection{Coset Enumerators and One-Hook Hanging Adinkras}
Since we have concluded that any connected Adinkra chromotopology is a quotient of the $N$-cube by a doubly even $[N,k]$-code $G$, the vertices of the Adinkra (that is, the component fields of the supermultiplet) correspond to cosets of $\{0,1\}^N$, thought of as $(\ZZ_2)^N$, by the subgroup $G$. This immediately implies that the Adinkra has $2^{N-k}$ nodes, where $k$ is the dimension of $G$. Of these, one half represents bosonic component fields, and the other half fermionic component fields in the corresponding supermultiplet.  We thus have that the number of bosonic component fields, $d_B$, and the number of fermionic component fields, $d_F$, after taking this quotient satisfy 
\begin{equation}
d_B ~=~ d_F ~=~  2^{N - k - 1}.
 \label{eDBF1}
\end{equation}

In Ref.\cite{r6-1}, we described a notion of hanging a graph by a one or more sources.  For instance, if we pick the vertex $v_*$ and hang the graph by it, we let all the arrows on edges point from the vertices that are further away from $v_*$ (as measured through the edge set) to vertices that are closer to $v_*$.  Equivalently, for each vertex $v$ we define the engineering degree of $v$ to be
\begin{equation}
 [v]=[v_*]-\frac{1}{2}\mbox{dist}(v,v_*),
 \label{v*v}
\end{equation}
where $\mbox{dist}(v,v_*)$ is the length of the shortest path from $v$ to $v_*$ in the edge set.  Then the arrows are drawn in the direction of increasing engineering degree, upward. 

For this, ``one-hooked'' Adinkra, we can examine how many component fields are in each engineering degree.  By \Eq{v*v}, this is equivalent to finding how many vertices are of a given distance from the highest vertex, $v_*$.  The corresponding notion in coding theory is the ``coset weight enumerator''\cite{rCHVP}.  This is a polynomial of the form $\sum_\ell a_\ell\, x^\ell$ where $a_\ell$ is the number of cosets whose distance to the $0$ coset is $\ell$.  Thus, the coset weight enumerator for a doubly even code can be used to find the number of component fields in each engineering dimension for a one-hooked Adinkra corresponding to that doubly even code.

The equation of\eq{eDBF1} implies that in order to minimize the numbers of bosonic and fermionic fields in an Adinkra, one must maximize the value of $k$ to $k=\vk(N)$. Thus the minimum number of bosons, $\min(d_B)$, and the minimum number of fermions, $\min(d_F)$, satisfy
\begin{equation}
\min(d_B) ~=~ \min(d_F) ~=~  2^{N - \vk(N) -1 }.
 \label{eDBF2}
\end{equation}
Since the ultimate application toward which all our efforts are aimed is a comprehensive 
understanding of the representation theory of spacetime supersymmetry in higher dimensions,
 it is perhaps useful to use this formula to gains some insights into the representations present in 4D,
$\cal N$-extended supersymmetric theories (note: ${\cal N}\Defl N/4$).  For this purpose we
note the results in Table~\ref{t:1cN8}, calculated using\eq{eDBF2}.  
\begin{table}[tb]
  \centering
\begin{tabular}{r|r|r}
\boldmath$N$&\boldmath$\cal N$&\boldmath$\min(d_B)$\\\hline
4&1&$4$\\
8&2&$8$\\
12&3&$64$\\
16&4&$128$\\
20&5&$1,024$\\
24&6&$2,048$\\
28&7&$16,384$\\
32&8&$32,768$\\
\end{tabular}
\caption{Minimal numbers for $d_B$ and $d_F$}
\label{t:1cN8}
\end{table}

The lowest values are in perfect agreement with well known results. The value for ${\cal N}=1$ 
corresponds to the chiral supermultiplet, the value for ${\cal N}=2$ corresponds to the vector and tensor supermultiplets, and the value of ${\cal N}=3,4$ corresponds to the respective Weyl supermultiplets.  Curiously enough and by very different arguments\cite{rGLPR}, the final result of $\min(d_B)=\min(d_F)=32,768$ in this table appeared very early in our considerations of iso-spinning particles  and ``Garden Algebras'' as the spectrum generating algebras of spacetime supersymmetry.

\section{Toward a Classification of Adinkras and Supermultiplets}
 \label{CliffHanger}
We have in this paper described the classification of chromotopologies of Adinkras: they are disjoint unions of connected chromotopologies, each of which is described by a doubly even binary linear code. The converse fact that every doubly even code can be used to form an Adinkra will be proved in Ref.\cite{r6-3.2}.  This is done by taking an arbitrary doubly even code and constructing a supermultiplet whose Adinkra chromotopology comes from the code.

We have already addressed the issue of orienting the edges in Ref.\cite{r6-1}: given a chromotopology, consider all the ways to hang the vertices of the graph at various heights, subject to certain conditions. In  Ref.\cite{r6-1} we have also proven that all such height assignments can be obtained one from another through a sequence of `vertex-raising/lowering'. In turn, the possible choices in dashing of edges, \ie, assigning factors of $-1$ to the action of certain $Q_I$'s on certain component fields (see Table~\ref{t:A}), is closely related to the classification of representations of the Clifford algebras\cite{rLM} and their iterated $\ZZ_2$-quotients, and is deferred to subsequent work.

As per Definition~\ref{dA}, a choice of a chromotopology together with a choice of orientations for all the edges and a choice of their dashedness fully specifies every Adinkra.

In light of all of the above, does this classify one-dimensional $N$-extended off-shell supermultiplets?  Not quite.  First, there is the question of whether every such supermultiplet comes from an Adinkra.  Our investigations in this area indicate that there do indeed exist non-Adinkraic supermultiplets, but Adinkras turn out to be useful in their description.  Second, we might ask whether two Adinkras may describe the same supermultiplet.  This indeed occurs, and this issue will be settled in Ref.\cite{r6-3.2}.  Suffice it to say here that, depending on the engineering 
dimensions of the component fields, distinct Adinkras usually correspond to distinct supermultiplets. Recall also that the number of possible such choices of engineering dimensions grows combinatorially with the number of component fields\cite{r6-1}, which in turn grows exponentially with $N$.

We are now also in position to review a minor controversy from a more complete perspective.  There is in the literature a work\cite{rKRT} the title of which leaves the impression of the existence of a prior complete classification of all off-shell representations of $N$-extended worldline supersymmetry without central charges.  Ref.\cite{r6-3a} then shows that only using the scheme described in\cite{rKRT}, there exist counter-examples to this impression, which in turn caused two of the Authors of\cite{rKRT} to create a further `refinement'\cite{rKT07} to their scheme.

Our current work shows that the problem of a complete classification of off-shell representations of $N$-extended worldline supersymmetry (and also Adinkras) includes the simpler problem of a complete classification of Adinkra chromotopologies.  The latter problem is itself a formidable one as apparently the number of inequivalent chromotopologies grows combinatorially with $N$ and $k$; see Table~\ref{t:G3}.  The works of Refs.\cite{rKRT,rKT07} include neither reference nor indication to this simpler problem of classifying chromotopologies or some equivalent thereof, and with it the emergence of the role played by doubly even binary linear error-correcting codes.
 Even with the proposed `refinement' the works of Refs.\cite{rKRT,rKT07} must then be regarded as leaving more work necessary for the presentation of a complete theory of the off-shell representations of $N$-extended worldline supersymmetry.

Finally, we wish to draw attention to the ``degeneracy'' in constructing even the minimal supermultiplets  for various $N$, uncovered by the listing of permutation equivalence classes of doubly even binary linear codes in Table~\ref{t:G3}.

 In Table~\ref{t:1N16}, we list the dimensions of the minimal representations for $1\leq N\leq16$, and provide the doubly even binary linear codes which encode the projection of $I^N=\{0,1\}^N$ so as to obtain the corresponding Adinkra topologies. Starting with $N=10$, there is more than one such permutation-inequivalent projection, and so more than one such minimal Adinkra topology.
 Each of these corresponds to at least one minimal supermultiplet, obtained by iteratively projecting an ``unprojected'' supermultiplet such as the Clifford supermultiplet, $\sM^=_{I^N}$, defined in section~\ref{s:CM}, according to the provided permutation equivalence class of codes. ``Unprojected'' supermultiplets other than $\sM^=_{I^N}$ can, in general, be quotiented only by a subcode of the listed codes, depending on the distribution of the component fields' engineering dimensions.
 
\begin{table}[tb]
  \centering
 {\begin{tabular}{@{} c|c|c|c|c|c|c|c @{}}
    \toprule
  \BM{N=1} &\BM{N=2} &\BM{N=3} &\BM{N=4} &\BM{N=5} &\BM{N=6} &\BM{N=7} &\BM{N=8} \\
  $(1|1)$ &$(2|2)$ &$(4|4)$ &$(4|4)$ &$(8|8)$ &$(8|8)$ &$(8|8)$ &$(8|8)$ \\
    \midrule
  &
  &
  &\stk{{\footnotesize$d_4$}\\
        1111}
  &\stk{{\footnotesize$t_1\8d_4$}\\
        01111}
  &\stk{{\footnotesize$d_6$}\\
        001111\\
        111100}
  &\stk{{\footnotesize$e_7$}\\
        0001111\\
        0111100\\
        1010101}
  &\stk{{\footnotesize$e_8$}\\
        00001111\\
        00111100\\
        11110000\\
        01010101}\\
    \bottomrule
  \omit\\[2mm]
    \toprule
  \BM{N=9} & \BM{N=10} & \BM{N=11} & \BM{N=12}
 & \BM{N=13} & \BM{N=14} & \BM{N=15} & \BM{N=16} \\
  $(16|16)$ & $(32|32)$ & $(64|64)$ & $(64|64)$
 & $(128|128)$ & $(128|128)$ & $(128|128)$ & $(128|128)$ \\
    \midrule
   \stk{{\footnotesize$t_1\8e_8$}\\
        0~00001111\\
        0~00111100\\
        0~11110000\\
        0~01010101}
  &\stk{{\footnotesize$t_2\8e_8$}\\
        00~00001111\\
        00~00111100\\
        00~11110000\\
        00~01010101}
  &\stk{{\footnotesize$t_3\8e_8$}\\
        000~00001111\\
        000~00111100\\
        000~11110000\\
        000~01010101}
  &\stk{{\footnotesize$d_4\8e_8$}\\
        0000~00001111\\
        0000~00111100\\
        0000~11110000\\
        0000~01010101\\
        1111~00000000}
  &\stk{{\footnotesize$t_1\8d_4\8e_8$}\\
        00000~00001111\\
        00000~00111100\\
        00000~11110000\\
        00000~01010101\\
        01111~00000000}
  &\stk{{\footnotesize$d_6\8e_8$}\\
        000000~00001111\\
        000000~00111100\\
        000000~11110000\\
        000000~01010101\\
        001111~00000000\\
        111100~00000000}
  &\stk{{\footnotesize$e_7\8e_8$}\\
        0000000~00001111\\
        0000000~00111100\\
        0000000~11110000\\
        0000000~01010101\\
        0001111~00000000\\
        0111100~00000000\\
        1010101~00000000}
  &\stk{{\footnotesize$e_8\8e_8$}\\
        00000000~00001111\\
        00000000~00111100\\
        00000000~11110000\\
        00000000~01010101\\
        00001111~00000000\\
        00111100~00000000\\
        11110000~00000000\\
        01010101~00000000}\\
    \midrule
  &\stk{{\footnotesize$d_{10}$}\\
        0000001111\\
        0000111100\\
        0011110000\\
        1111000000}
  &\stk{{\footnotesize$t_1\8d_{10}$}\\
        0~0000001111\\
        0~0000111100\\
        0~0011110000\\
        0~1111000000}
  &\stk{{\footnotesize$d_{12}$}\\
        000000001111\\
        000000111100\\
        000011110000\\
        001111000000\\
        111100000000}
  &\stk{{\footnotesize$t_1\8d_{12}$}\\
        0~000000001111\\
        0~000000111100\\
        0~000011110000\\
        0~001111000000\\
        0~111100000000}
  &\stk{{\footnotesize$d_{14}$}\\
        00000000001111\\
        00000000111100\\
        00000011110000\\
        00001111000000\\
        00111100000000\\
        11110000000000}
  &\stk{{\footnotesize$e_{15}$}\\
        000000000001111\\
        000000000111100\\
        000000011110000\\
        000001111000000\\
        000111100000000\\
        011110000000000\\
        101010101010101}
  &\stk{{\footnotesize$e_{16}$}\\
        0000000000001111\\
        0000000000111100\\
        0000000011110000\\
        0000001111000000\\
        0000111100000000\\
        0011110000000000\\
        1111000000000000\\
        1010101010101010} \\
    \midrule
  &
  &\stk{{\footnotesize$d_4\8e_7$}\\
        0000000~1111\\
        0001111~0000\\
        0111100~0000\\
        1010101~0000}
  &
  &\stk{{\footnotesize$d_6\8e_7$}\\
        0000000~001111\\
        0000000~111100\\
        0001111~000000\\
        0111100~000000\\
        1010101~000000}
  &\stk{{\footnotesize$e_7\8e_7$}\\
        0000000~0001111\\
        0000000~0111100\\
        0000000~1010101\\
        0011110~0000000\\
        1111000~0000000\\
        1010101~0000000}
  & \\
    \midrule
  &
  &
  &
  &\stk{{\footnotesize$e_{13}$}\\
        0000000001111\\
        0000000111100\\
        0000011110000\\
        0001111000000\\
        1110101010101}
  &\stk{{\footnotesize$e_{14}$}\\
        00000000001111\\
        00000000111100\\
        00000011110000\\
        00001111000000\\
        00111100000000\\
        11010101010101}
  & \\
    \bottomrule
  \end{tabular}}
  \caption{Minimal supermultiplets of the indicated $N$-extended supersymmetry, with the stated number of (bosonic$|$fermionic) component fields and a generator set for the code, $C$, specifying the chromotopology as the $\{0,1\}^N/C$ quotient.}
  \label{t:1N16}
\end{table}

Notice that Table~\ref{t:1N16} details the cases represented only by the $N\leq16$ right-most entries in Table~\ref{t:G3}, and the corresponding entries in Table~\ref{t:1cN8}. The remaining, $16<N\leq32$ right-most entries in Table~\ref{t:G3} also correspond to topologically distinct minimal supermultiplets, but their sheer number prevents us from displaying them in the manner of Table~\ref{t:1N16}.
 This clearly illustrates that this ``degeneracy'' amongst even the minimal supermultiplets grows extremely fast with $N$.
 
 In turn, this provides a surprising wealth of building blocks for models with $N$-extended worldline supersymmetry as $N$ is increased towards $N=32$.

\begin{flushright}\sl
  Mathematics is not a careful march down a well-cleared highway,\\*
  but a journey into a strange wilderness,\\*
  where explorers often get lost.\\*
   --~W.S.~Anglin
\end{flushright}

\bigskip\bigskip\paragraph{\bf Acknowledgments:}
 The research of S.J.G.\ is supported in part by the National Science
Foundation Grant PHY-0354401.
 T.H.\ is indebted to the generous support of the Department of
Energy through the grant DE-FG02-94ER-40854. The work described in Appendix~\ref{app:RM} uses computer facilities supported by the National Science Foundation Grant No.~DMS-0555776.

\clearpage
\appendix
\section{Deferred Details on Supersymmetry and Adinkras}
 \label{app:SA}
\subsection{Real Coefficients}
\label{app:real}
Real supermultiplets, $\sM=(F_1\6(\t),\cdots,F_m\6(\t))$ consist of real component fields: $\big(F_A(\t)\big)^\dag=F_A(\t)$, and supersymmetry is assumed to preserve this condition.  In this section, we will show that this condition implies that the coefficients $c$ in \Eq{eQb} and \Eq{eQf} are real.

We use the convention whereby $(XY)^\dag=Y^\dag X^\dag$, regardless whether $X$ and $Y$ are bosonic (commuting) or fermionic (anticommuting) objects, as is standard in the physics literature. 

If $\f_A(\t)$ is real, then its supersymmetry transform must also be real. Applying this to \Eq{eQb} results in
\begin{equation}
 \d_Q(\e)\, \f_A(\t) = -i\e^I\,c\,\ddt^{\,[\f_A]+\frac12-[\j_B]} \j_B(\t).
 \label{edQe}
\end{equation}
 Thus
\begin{equation}
 \Big(\d_Q(\e)\,\f_A(\t)\Big)^\dag
 ~=~i\,\ddt^{\,[\f_A]+\frac12-[\j_B]} \j^\dag_B(\t)\,c^*\,\e^I
 ~=~-i\e^Ic^*\,\ddt^{\,[\f_A]+\frac12-[\j_B]} \j_B(\t).\label{edQe+}
\end{equation}
 Comparing the right-hand sides of Eqs.\eq{edQe} and\eq{edQe+}, we find that
\begin{equation}
c^* =\,c.
\label{eRealC}
\end{equation}%
Thus the coefficients $c$ are real.

\subsection{The Proof that Scaling Factors $c=\pm 1$ are the Only Ones Necessary}
\label{app:pm}
We now will show that the coefficients in an adinkraic supermultiplet can be chosen to be $c=\pm 1$ via a rescaling of fields by real numbers.

\begin{proposition}\label{rescale}
Suppose we have an adinkraic supermultiplet, with $F_1\6(\t),\cdots,F_{2m}\6(\t)$ for component fields.  There is a real rescaling of these component fields so that each non-zero coefficient $c$ in Eqs.\eq{eQb} and\eq{eQf} is equal to $1$ or $-1$.
\end{proposition}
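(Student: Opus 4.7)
First, I would reduce to a connected Adinkra (since the problem decomposes across connected components) and fix a root vertex $v_*$ with $r_{v_*}=1$. For any other vertex $F_A$, choose a path $v_*=v_0,v_1,\ldots,v_n=F_A$ in the Adinkra and read off the scalars $c_k$ appearing in the formulas $Q_{I_k}F_{v_{k-1}}=c_k\,\ddt^{\lambda_k}F_{v_k}$. By Appendix~\ref{app:real}, $c_k$ is real on the bosonic side and purely imaginary on the fermionic side (via the $i/c$ in \eqref{eQf}), so $|c_k|>0$ is well-defined in either case. I would then propagate the rule $r_{v_{k-1}}/r_{v_k}=|c_k|$ along the path to define $r_A$ as a positive real number. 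Under the rescaling $\tilde F_A:=r_A F_A$, each bosonic-side coefficient $c$ transforms to $c\,r_B/r_A$, which by construction has modulus one; the companion fermionic-side coefficient $i/c$ transforms consistently to $\pm i$, matching \eqref{eQf} with the normalized $c=\pm 1$.

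The only nontrivial content is to show that $r_A$ is independent of the chosen path, equivalently that $\prod_k|c_k|=1$ around every closed cycle $v_0,v_1,\ldots,v_n=v_0$. I would compute $(Q_{I_n}\cdots Q_{I_1})^2 F_{v_0}$ in two different ways. Following the cycle yields $C\,\ddt^L F_{v_0}$ with $C=\prod_k c_k$, so the square is $C^2\,\ddt^{2L} F_{v_0}$. On the other hand, $(Q_{I_n}\cdots Q_{I_1})^2$ is a product of $2n$ generators in which every color $J$ appears an even number of times $2m_J$; the supersymmetry algebra \eqref{eSuSy} permits me to anticommute them into sorted order (with some overall sign $\epsilon$) and collapse each $Q_J^{2m_J}=(i\ddt)^{m_J}$, producing the purely algebraic identity $(Q_{I_n}\cdots Q_{I_1})^2=\epsilon\,i^n\,\ddt^n$. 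Equating $\ddt$-exponents gives $n$ even and $L=n/2$; equating coefficients gives $C^2=\epsilon\,i^n$, so $|C|^2=1$ and hence $\prod_k|c_k|=1$.

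With path-independence secured, $r_A$ is well-defined for every vertex and the rescaled supermultiplet has every bosonic-side coefficient of modulus one; since such coefficients are also real by Appendix~\ref{app:real}, each must be $\pm 1$, and the companion relation \eqref{eQf} then forces the fermionic-side coefficients to be $\pm i$. This is exactly the adinkraic normalization of Definition~\ref{dAd}.

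The main obstacle is the cycle condition. The key trick is to square the operator $Q_{I_n}\cdots Q_{I_1}$ before simplifying: this guarantees that every color appears an even number of times, so the simplification to $\epsilon\,i^n\,\ddt^n$ uses only the supersymmetry algebra \eqref{eSuSy} and makes no use of the as-yet-unproved normalization $c=\pm 1$. This avoids the circularity that would arise from appealing directly to the doubly-even-code computation around \eqref{eBack*} in Section~\ref{AdiTop}, whose derivation already used Definition~\ref{dAd}.
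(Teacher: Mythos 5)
Your proof is correct and follows essentially the same route as the paper's: the paper rescales along a spanning tree and then handles each non-tree edge by squaring the product of $Q$'s around the resulting closed loop, which is exactly your computation $(Q_{I_n}\cdots Q_{I_1})^2=\epsilon\,i^n\ddt^n$ giving $C^2=\pm1$ and hence $\prod_k|c_k|=1$; your all-cycles path-independence argument and the paper's spanning-tree-plus-fundamental-cycles bookkeeping are interchangeable. The only slip is the direction of the rescaling: with $\tilde F_A:=r_AF_A$ the coefficient becomes $c\,r_A/r_B$, so the propagation rule should read $r_{v_k}/r_{v_{k-1}}=|c_k|$ rather than its reciprocal --- harmless, since the cycle condition is what secures well-definedness either way.
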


\begin{proof}
We suppose we have component fields $F_1\6(\t),\cdots,F_{2m}\6(\t)$,\footnote{See footnote~\ref{introf} in the proof of Theorem~\ref{T:QuoC} about the use of this notation.} and supersymmetry generators $Q_I$ so that the supersymmetry transformation rules are all of the form
\begin{equation}
Q_I F_A = c\,\ddt^\lambda F_B\label{adinkralike}
\end{equation}
where $\lambda$ is either $0$ or $1$, and $F_B$ is another component field, and $c$ is a complex number (real if $F_A$ is bosonic, pure imaginary if $F_A$ is fermionic).

We can draw an Adinkra-like graph for this supermultiplet: we again create vertices $v_1,\cdots,v_{2m}$ to correspond to the component fields $F_1,\cdots,F_{2m}$---white for bosonic fields and black for fermionic fields.  For each supersymmetry transformation like the one above, we draw an edge from the vertex $v_A$ to $v_B$.  The arrow goes from $v_A$ to $v_B$ if $\lambda=0$, and the reverse if $\lambda=1$.  As before, this is consistent with the corresponding equation for $Q_I F_B$.  Instead of choosing between a dashed or solid line, we label the edge by $c$ if $F_A$ is bosonic and $i/c$ if it is fermionic.  It is easy to check that this is consistent with the corresponding equation for $Q_I F_B$.

We can work on each connected component of this Adinkra-like graph separately.  So for the following we assume our Adinkra-like graph is connected.

We wish to rescale the various component fields so that $c$ has absolute value 1.  A naive approach would be to simply rescale $F_B$ by $c$, so that $c$ disappears in \Eq{adinkralike}.  But remember that there are $2mN$ of these equations, and only $2m$ component fields.  More visually, imagine the Adinkra-like graph with separate $c$-labels on each edge.  We can imagine starting at one vertex, then going along an edge to the next vertex, rescaling the corresponding field by whatever it takes to make $c=1$ on that edge.  The trouble is that when we come back to a vertex we already saw, we are no longer free to rescale that field without messing up other $c$'s.  But as we will see, it turns out that the supersymmetry algebra guarantees that when we return to a previously-seen vertex, the edge will have $|c|=1$.

We first begin by removing all the loops, using the following standard procedure from graph theory: if we choose an edge, and erase it from the graph, the graph may either separate or stay connected.  We first find an edge so that erasing it keeps the graph connected.  By successively finding such edges and erasing them, we obtain a minimal graph so that this graph is still connected.  This resulting graph is a tree, so that for every pair of vertices $v_i$ and $v_j$, there exists a unique, non-backtracking path from $v_i$ to $v_j$ in this tree (were it not unique, we could remove another edge).  This tree is called a {\em\/spanning tree} for the original graph.  The spanning tree for a graph is not unique, but we select one.  See Figure~\ref{fig:spantree}.

\begin{figure}[ht]
\begin{center}
\begin{picture}(130,75)(0,0)
 \put(1,2){\includegraphics{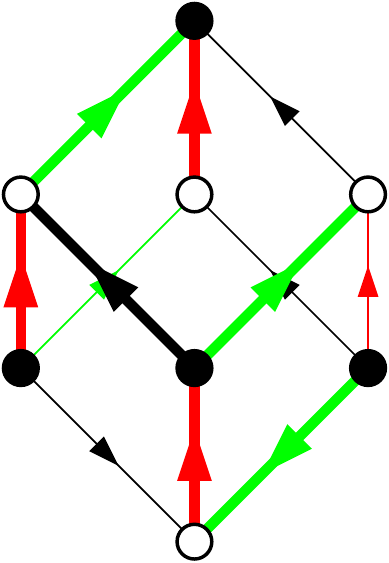}}
 \put(80,2){\includegraphics{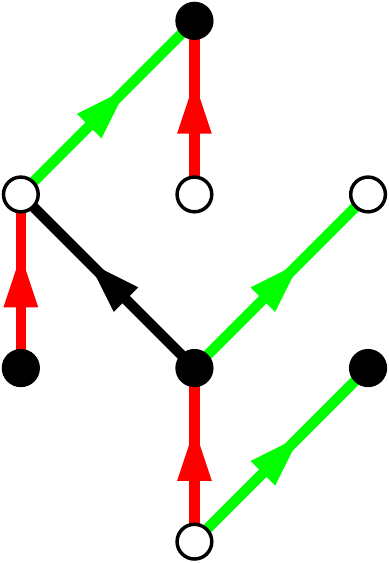}}
\end{picture}
\end{center}
\caption{On the left is an $N=3$ cubical Adinkra.  The dashedness of the lines are suppressed for clarity.  Some edges (shown with thin lines) can be deleted, until what is left is a tree, called the {\em\/spanning tree} (shown at right).\label{fig:spantree}}
\end{figure}

Now pick any vertex in our graph, denote it $v_*$ and the corresponding component field $F_*(\t)$.  For simplicity of exposition we can assume $v_*$ is bosonic.  If $v_i$ is any other vertex, there is a sequence of edges in the tree connecting $v_*$ with $v_i$, corresponding to a sequence of $Q_I$'s, say $Q_{I_1},\cdots, Q_{I_n}$.  We then apply \Eq{adinkralike} iteratively to $F_*(\t)$, obtaining
\begin{equation}
 Q_{I_n}\cdots Q_{I_1} F_*(\t)
  = C_i\,\ddt^{\lambda} F_i(\t) \label{treeAd}
\end{equation}
for some non-negative integer $\lambda$ (recording the number of arrows that point in the opposite direction of the path), and some non-zero coefficient $C_i$.  This coefficient $C_i$ is real if $n\equiv 0, 1\pmod{4}$, and pure imaginary if $n\equiv 2, 3\pmod{4}$.  It is then possible to rescale $F_i(\t)$ as
\begin{equation}
\tilde{F}_i(\t):= |C_i|\,F_i(\t),\label{eqn:rescale}
\end{equation}
and using $\tilde{F}_i(\t)$ instead of $F_i(\t)$ in the description of the supermultiplet. This rescaling is by a real number. Do so for every $v_i\neq v_*$. We have now done all the rescaling of the fields that we will do. The result is a new Adinkra-like graph, where each $C_i$ has absolute value 1.

If instead we want to go from $v_i$ back to $v_*$ in the tree, we can do the supersymmetry transformations in reverse:
\begin{equation}
Q_{I_1}\cdots Q_{I_n} F_i = \frac{1}{C_i}\,\ddt^{n-\lambda} F_*. \label{treeinvAd}
\end{equation}

Now suppose we choose two vertices $v_i$ and $v_j$ of the graph.  We construct a path $P$ from $v_j$ to $v_i$ by first taking the non-retracing path in the spanning tree from $v_j$ to $v_*$, followed by the non-retracing path in the spanning tree from $v_*$ to $v_i$.  See Figure~\ref{fig:treejtoi}.  Note that $P$ might be partially retracing, if the last few edges of the first path is retraced backward by a beginning segment of the second path.  This retracing is immaterial for us.

\begin{figure}[ht]
\begin{center}
\begin{picture}(38,50)(-6,5)
 \put(-3.5,-1.5){\includegraphics{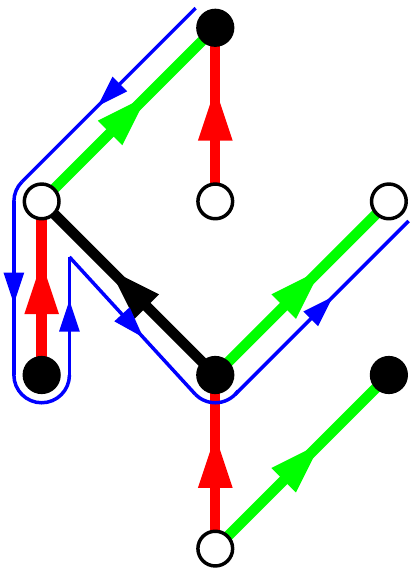}}
 \put(0,12.5){\makebox[0in][r]{$v_*$}}
 \put(39,35){\makebox[0in][l]{$v_i$}}
 \put(23,54){\makebox[0in][c]{$v_j$}}
 \put(2,48){{\color{blue} $P$}}
\end{picture}
\end{center}
\caption{Using the same example as in Figure~\ref{fig:spantree}, suppose we are considering the vertices $v_i$ and $v_j$ as seen in the figure.  The blue path shows a path $P$ in the spanning tree going from $v_j$ to $v_*$, followed by a path in the spanning tree going from $v_*$ to $v_i$.  One edge is retraced, but this is irrelevant for our purposes.  The sequence of colors (black\,=\,1, red\,=\,2, green\,=\,3) along $P$ indicate that we should consider $Q_3Q_1Q_2Q_2Q_3 v_j$ which will be a constant multiple of $\ddt^3 v_i$.  Because of our rescaling in \Eq{eqn:rescale}, we know this constant has absolute value $1$.
\label{fig:treejtoi}}
\end{figure}

If the path $P$ involves the sequence $Q_{J_1}\cdots Q_{J_k}$ of supersymmetry generators, we then use \Eq{treeAd} and \Eq{treeinvAd} to get
\begin{equation}
Q_{J_k}\cdots Q_{J_1} F_j=\frac{C_i}{C_j}\,\ddt^\mu F_i,\label{eqn:treebothAd}
\end{equation}
where $\mu$ is some non-negative integer, measuring the number of arrows pointing against the path $P$.  Note that $C_i/C_j$ has absolute value $1$.

In particular, if we consider any edge of our Adinkra that is in the spanning tree, its label $c$ now has absolute value $1$.

We now examine an edge of the Adinkra that is not in the spanning tree.  Suppose it connects a bosonic vertex $v_i$ to a fermionic vertex $v_j$, \ie, there is a $Q_I$ so that
\begin{equation}
Q_I v_i=c\,\ddt^{\lambda} v_j\label{eqn:newedge}
\end{equation}
for some real $c$ and $\lambda=0$ or $1$.  Now take the path $P$ in the spanning tree described above, namely, the one that goes from $v_j$ to $v_*$ then to $v_i$.  This corresponds to a sequence of supersymmetry generators $Q_{I_1},\cdots,Q_{I_n}$.  We now close the path $P$ into a closed loop, $P'$ by adding to $P$ the new edge going back from $v_i$ to $v_j$.

\begin{figure}[ht]
\begin{center}
\begin{picture}(38,60)(-6,-4)
 \put(-3.5,-1.5){\includegraphics{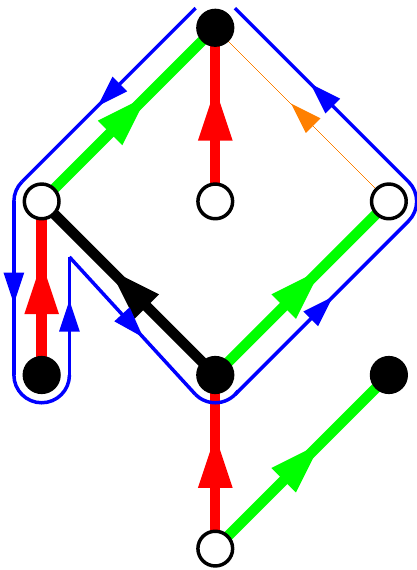}}
 \put(0,12.5){\makebox[0in][r]{$v_*$}}
 \put(39,35){\makebox[0in][l]{$v_i$}}
 \put(18,58){\makebox[0in][c]{$v_j$}}
 \put(33,48){{\color{blue} $P'$}}
\end{picture}
\end{center}
\caption{Suppose we consider the orange edge from $v_i$ to $v_j$, which is in the original Adinkra but not in the spanning tree.  We draw that path $P$ from $v_j$ to $v_i$ in the spanning tree as in Figure~\ref{fig:treejtoi}, and then tack on the orange edge.  The resulting path $P'$, shown in blue, is a loop from $v_j$ to itself.
\label{fig:treeloop}}
\end{figure}

Using \Eq{eqn:newedge} together with \Eq{eqn:treebothAd} we get, for some $r\in\IN$,
\begin{equation}
 Q_I Q_{I_n}\cdots Q_{I_1} F_j(\t)
 = c(-i)^r\frac{C_i}{C_j}\,\ddt^{\lambda+\mu} F_j.
\end{equation}
As this is a closed loop, we can apply this equation to itself, and get
\begin{equation}
 \left(\,Q_I Q_{I_n}\cdots Q_{I_1}\,\right)^2 F_j(\t)
   = (-1)^rc^2\left(\,\frac{C_i}{C_j}\,\right)^2
       \ddt^{2(\lambda+\mu)} F_j(\t).\label{eSqSeq}
\end{equation}
Using \Eq{eSuSy} we can anti-commute the $Q_I$'s past each other, contracting the repeated $Q_I$'s, until we get
\begin{equation}
 \pm (i\ddt)^{\,n+1} F_j(\t)
   = (-1)^r c^2\left(\,\frac{C_i}{C_j}\,\right)^2
       \ddt^{2(\lambda+\mu)} F_j(\t).\label{eResult}
\end{equation}
The $\pm$ sign is determined by $n$ and how many repetitions there are in the $Q_I$ sequence.  Note that $n+1$ is even, since each edge connects vertices of opposite statistics, and a sequence of $n+1$ of these edges go from $v_j$ to itself.  Thus, the left hand side of \Eq{eResult} is actually real.  We can then match coefficients in \Eq{eResult} to get
\begin{equation}
c^2 = \pm (-1)^{r+(n+1)/2}\frac{C_j^2}{C_i^2} = \pm 1.\label{eSign}
\end{equation}
From this we see that $c$ must have complex absolute value $1$.  Since $c$ is real, $c=\pm 1$.
Furthermore, we also note from \Eq{eResult} that $2(\lambda+\mu)=n+1$, indicating that the loop $P'$ involved precisely $(n+1)/2$ arrows going against the path, and thus, $(n+1)/2$ arrows going along the path. This implies that the absence of central charge excludes Escheric loops\cite{rA}.
\end{proof}

This also proves that Adinkraic supermultiplets with no central charge are {\em\/engineerable\/}\cite{r6-1}: it is possible to assign engineering dimensions to all component fields, consistently with the action Eqs.\eq{eQbf} of the supersymmetry algebra\eq{eSuSy}, and without having to introduce parameters of nonzero engineering dimension in either of these.

\section{Complex Adinkras}
\label{app:complex}
Everything in the paper up to now has dealt with supermultiplets thinking of each field as real.  If we wish to represent complex supermultiplets, we can introduce a kind of Adinkra that is analogous to the Adinkras we have been considering up to now, only using vertices to describe complex fields instead of real fields.  Many of the same ideas continue to hold, except that instead of the coefficients $c$ being real, they might also be complex.

If an adinkraic supermultiplet is complex, then the proof of Proposition~\ref{rescale} in Appendix~\ref{app:pm} can be modified in the following way: when the fields are rescaled by a real number to make $|C_i|=1$, we can actually rescale the fields by a complex number to make each $C_i=1$.  Then, \Eq{eSign} shows that the $c^2=\pm 1$, so that $c$ could be $\pm 1$, or it could also be $\pm i$.  So instead of edges being either solid or dashed, there should be two other ways to decorate the edge to denote the additional cases of $i$ and $-i$.

For examples of complex Adinkras, we can take the cubical Adinkras, and simply reinterpret the vertices as representing complex component fields, instead of real ones.  Thus, for $N=1$, there is the base Adinkra and the Klein-flipped base Adinkra, just as in the real case.  Note that these have edges labelled only by $\pm 1$, so it is no surprise that there are other complex Adinkras that are not described in this way.

As in the real case, connected Adinkras can be obtained by taking the quotient of a cube by a binary linear block code (the same proof applies), but this time the condition that the code is doubly even is replaced by the condition that the code is even and self-orthogonal.  That is, codewords need not have weights that are multiple of $4$; rather, we require simply that their weights be even. In this case, we must also impose self-orthogonality, since unlike doubly even codes, even codes are not always self-orthogonal.

For instance, the $N=2$ exterior Adinkra splits.  To be specific, the $N=2$ exterior supermultiplet involves two bosons $\f_1$, $\f_2$ and two fermions $\j_1$, $\j_2$.
\begin{align}
Q_1\f_1&=\4\j_1,&       Q_2\f_1&=-\j_2,\\
Q_1\f_2&=\4\j_2,&       Q_2\f_2&=\4\j_1,\\
Q_1\j_1&=-i\,\ddt\f_1,& Q_2\j_1&=-i\,\ddt\f_2,\\
Q_1\j_2&=-i\,\ddt\f_2,& Q_2\j_2&=\4i\,\ddt\f_1.
\end{align}
To project out, we define $\f=\f_1+i\f_2$, $\bar\f:=\f_1-i\f_2$, $\j=\j_1+i\j_2$, and $\bar\j:=\j_1-i\j_2$.  The supermultiplet $(\f_1,\f_2|\j_1,\j_2)$ may then be rewritten as a complex supermultiplet, $(\f|\j)$, and its conjugate, $(\bar\f|\bar\j)$, with the supersymmetry transformations:
\begin{align}
Q_1\f&=\4\j,&                Q_2\f&=\4i\j,\\
Q_1\j&=-i\,\ddt\f,&          Q_2\j&=-\ddt\f;\\[2mm]
Q_1\bar\f&=\4\bar\j,&        Q_2\bar\f&=-i\bar\j,\\
Q_1\bar\j&=-i\,\ddt\bar\f,&  Q_2\bar\j&=\4\ddt\f.
\end{align}
The (complex) $\IC$-Adinkras for $(\f|\j)$ and for $(\bar\f|\bar\j)$ exhibit a relationship between the two complex supermultiplets:
\begin{equation}
 \vC{
 \begin{picture}(120,35)(-25,-2)
 \put(-2,0){\includegraphics[height=30mm]{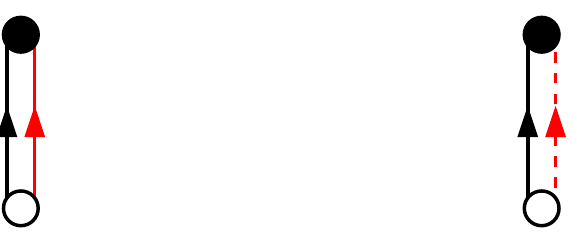}}
 \put(3,28){$\j$}
 \put(4,15){\color{red}$i$}
 \put(-5,15){$1$}
 \put(3,1){$\f$}
 \put(10,15){\vector(1,0){45}}
 \put(10,15){\vector(-1,0){0}}
 \put(25,18){complex}
 \put(22,10){conjugation}
 \put(59,28){$\bar\j$}
 \put(67,15){\color{red}$-i$}
 \put(59,15){$1$}
 \put(59,1){$\bar\f$}
 \end{picture}}
 \label{eCApair}
\end{equation}
The two disagree only on the values of $c$. The unlabeled topologies are the same, with a double edge between two vertices. 

We note that one of these two complex supermultiplets\eq{eCApair} can be regarded as the ``chiral'' supermultiplet, while the other is its conjugate, the ``anti-chiral'' supermultiplet. Also, $Q_2\neq\pm i\,Q_1$; we note that $Q_2=+iQ_1$ when acting on $\f(\t)$ and $\bar\j(\t)$, but $Q_2=-iQ_1$ when acting upon $\bar\f(\t)$ and $\j(\t)$. Thus, $Q_1$ and $Q_2$ really are independent.

\section{Computing Doubly Even Codes}
 \label{app:RM}
 For some classification problems, there are theorems that answer the question once and for all.  For example, the theorem on the classification of finitely generated abelian groups describes the isomorphism class of such a group via a simple sequence of integers.  In most cases in combinatorics, one cannot hope for such a theorem.  Instead one must settle for an exhaustive enumeration up to a certain size.  It turns out that there are more than 1.1 trillion doubly even, binary $[32,k]$ codes, which without compression will take some number of terabytes to store.  Using methods including those described in this appendix, we have already computed over 60,000 of these codes.
 
 The first section of this appendix gives the formula describing the number of codes we are interested in and a lower bound for the number of permutation-isomorphism classes of codes.  The next section explains the computations necessary for computing the automorphism group and canonical representative of a binary code, which are critical for the exhaustive enumeration.  In the third section, some optimizations for the particular situation at hand are discussed, which are used in an implementation written specifically for this purpose. The fourth section describes the overarching algorithm designed to produce the intended enumeration.
 
 The methods used to tackle this problem, namely partition refinement and canonical augmentation, were originally developed by McKay\cite{pgi,orderly}, who describes an effective method for computing the automorphism group of a graph and for generating a unique representative for each isomorphism class.  These methods were adapted by Leon in Ref.\cite{leon-main} to tackle similar problems, including several  group theory questions (which are graph isomorphism complete) as well as the problem at hand, namely computing the automorphism group of a linear code.  McKay and Leon have both written optimized software packages implementing these methods.  The first software package, which remains the standard program for determining graph isomorphism, is called \verb|nauty|, and is available online through McKay's website\cite{bdm-url}.    The second is now licensed under the GPL, thanks to the efforts of David Joyner and Vera Pless, and is available as a part of GUAVA\cite{guava}, a GAP package for coding theory. GPL implementations for the cases of graphs of any size and binary codes of length up to 32 (or 64 on many machines) can be found in Sage\cite{sage} ({\small\verb|sage.graphs.graph_isom|} and {\small\verb|sage.coding.binary_code|}).  For more details about the latter, written specifically for the classification of Adinkra topologies, the Reader can consult Section \ref{new_soft}.  The second and fourth sections are essentially a summary of the techniques described in Refs.\cite{pgi} and\cite{orderly}, respectively, which provide the full story.  The Reader should also be aware of the improvements to this algorithm in the sparse case in Ref.\cite{saucy}.

\subsection{The Mass Formula}
 The following formula, due to Gaborit\cite{rPGMass}, gives the number $\sigma(N,k)$ of 
distinct doubly even binary $[N,k]$ codes, depending on the residue of $N$ modulo 8:
\begin{align}
  \sigma(N,k) &= 
 \begin{cases}
  \ddd\prod_{i=0}^{k-1}\frac{2^{N-2i-2}+2^{\left\lfloor\frac{N}{2}\right\rfloor-i-1}-1}{2^{i+1} - 1},
      &\quad\text{if $N \equiv 1,7 \pmod 8,$} \\[6mm]
  \ddd\prod_{i=0}^{k-1}\frac{\bigl(2^{\frac{N}{2}-i-1}+1\bigr)\bigl(2^{\frac{N}{2}-i-1}-1\bigr)}{2^{i+1} - 1},
      &\quad\text{if $N \equiv 2,6 \pmod 8,$} \\[6mm]
  \ddd\prod_{i=0}^{k-1}\frac{2^{N-2i-2}-2^{\left\lfloor\frac{N}{2}\right\rfloor-i-1}-1}{2^{i+1} - 1},
      &\quad\text{if $N \equiv 3,5 \pmod 8,$} \\[6mm]
  \ddd\prod_{i=0}^{k-2}\frac{2^{N-2i-2}+2^{\frac{N}{2}-i-1}-2}{2^{i+1} - 1}
     \cdot\vp^+(N,k),
      &\quad\text{if $N \equiv 0 \pmod 8,$} \\[6mm]
 \ddd\prod_{i=0}^{k-2}\frac{2^{N-2i-2}+2^{\frac{N}{2}-i-1}-2}{2^{i+1} - 1}
     \cdot\vp^-(N,k),
      &\quad\text{if $N \equiv 4 \pmod 8,$}
 \end{cases}\\[4mm]
 \vp^\pm(N,k)
     &\Defl\frac{1}{2^{k-1}} + \frac{2^{N-2k} \pm 2^{\frac{N}{2} - k} - 2}{2^k - 1}~,
\end{align}
We are interested in those codes with $N\leq 32$. According to the mass formula above, we find the most such codes when $N=32$ and $k=10$.  In this case,
\begin{equation}
 \begin{aligned}
  \sigma(32, 10) &= 162,953,548,221,364,911,292,708,847,668,107,902,745,573,601,875 \\
   &\approx 1.6 \times 10^{47}.
 \end{aligned}
\end{equation}
However, we are interested not in the codes themselves, but rather their equivalence classes under permutations of the columns.
To establish a lower bound on the number of equivalence classes, suppose that every such code has a trivial automorphism group and thus its orbit is as large as possible, namely $32!$. Hence there are at least 
\begin{equation}
\bigg\lceil\frac{\sigma(32, 10)}{32!}\bigg\rceil = 619,287,158,132
\end{equation}
distinct classes of $[32, 10]$ codes.  If we carry out a similar computation for each case of interest, we find that there are at least
\begin{equation}
\sum_{k=0}^{\vk(32)}\bigg\lceil \frac{\sigma(32,k)}{32!}\bigg\rceil=1,117,005,776,858
~\approx~1.1\times10^{12}
\end{equation}
distinct classes (note that for smaller $N$, we simply add zero columns to obtain an equivalent code of degree 32).  Since the automorphism group of a generic code is small, this lower bound is a reasonable estimate for the actual number of codes.

 Regardless of the speed at which an individual processor can generate codes, the procedure for generating the codes can be run in many parallel processes.  We can not only divide by the rate at which we can generate codes per processor, but we can also divide by the number of processors working on the job.  This is what makes the computation feasible; speed increases almost linearly as a function of the number of processors.

\subsection{Computing the Automorphism Group}
 Computing the automorphism group of a binary code is closely related to the graph isomorphism problem.  Given a binary block code $C \subset \{0,1\}^N,$ define a bipartite graph $G(C)$, with the vertices partitioned into ``left'' and ``right'', as follows:  The words of the code are the left vertices, and the set $\{1, 2, ..., N\}$ forms the right set of vertices.  Given a word $w$ on the left, and a number $j$ on the right, there is an edge between the two if and only if $w$ has a 1 in the $j^\text{th}$ place.  Consider the automorphism group $\Aut(G(C))$ of the graph, but allow only those permutations that map left vertices to left vertices and right vertices to right vertices.  This gives a subgroup $\Aut(G(C))_b$ which is isomorphic in a canonical way to the permutation automorphism group of the code, simply by considering each permutation's action on the right set of vertices, which is identified with the set of columns of the code.  Our approach is to think of the codes in terms of their corresponding bipartite graphs, with additional structure coming from the linear, self-orthogonal code. The algorithm described in this section computes the automorphism group of the code, as well as a canonical representative of the code, which is an arbitrary but fixed representative of the isomorphism class.
 
 Suppose $G$ is a graph, and $\Pi$ is a partition of the vertices $V(G)$.  The partition $\Pi$ is called an \em equitable partition \em if for every pair of cells $C_1, C_2$ in $\Pi$, the number of edges $\{u, v\}$ such that $v \in C_2$ is constant as $u$ ranges over vertices in $C_1$.  By considering the orbits of vertices of the graph under any subgroup of the automorphism group, one obtains a partition which is always equitable.  However, the converse is not always true; there are equitable partitions which do not arise in this way.  One partition $\Pi_1$ is \em coarser \em than another partition $\Pi_2$ (or $\Pi_2$ is \em finer \em than $\Pi_1$) if every cell of $\Pi_2$ is a subset of a cell of $\Pi_1$.  The \em discrete partition \em is the one where every cell is of size one, and the \em unit partition \em is the one where there is only one cell.
 
 Given a graph $G$ and a partition $\Pi$ of $V(G)$, denote by $E_G(\Pi)$ the coarsest equitable partition of $G$ which is finer than $\Pi$.  In particular, if $\Pi$ is equitable with respect to $G$, then $E_G(\Pi) = \Pi$.  The algorithm described in this section takes as input a graph $G$ and a partition $\Pi_0$ of $V(G)$, and it returns the subgroup $\Aut(G)_{\Pi_0}$ of the automorphism group consisting of permutations that respect $\Pi_0$, \ie, that do not carry any vertex of one cell of $\Pi_0$ into another.  From here on, consider all partitions to be ordered, so that for example $(\{1,2,3\},\{4,5\}) \neq (\{4,5\},\{1,2,3\})$.  In particular, a discrete ordered partition is simply an ordering on the vertices of $G$.  For $v \in V(G)$, if $\Pi = (C_1, ..., C_r)$ and $v \in C_i$, define $R(G, \Pi, v) = E_G(\Pi^\prime)$, where $\Pi^\prime = (C_1, ..., C_{i-1}, \{v\}, C_{i-1}\setminus\{v\}, C_{i+1}, ..., C_r)$.  This defines $R(G, \Pi, v)$ up to reordering of the cells of the partition. For full details, see Ref.\cite{pgi}, Algorithm 2.5, which defines the ordering.
 
 Define a rooted tree $T = T(G, \Pi_0)$ consisting of equitable partitions of $G$ finer than $\Pi_0$ as follows: The root of $T$ is the partition $E_G(\Pi_0)$, and for any node $\Pi$ of $T$, its children are the partitions $R(G, \Pi, v)$ for which $v$ is not yet in a singleton cell of $\Pi$.  The group $\Aut(G)_{\Pi_0}$ acts on the tree $T$ by taking a sequence of nested partitions $\Pi_0, ..., \Pi_k$ to the resulting nested sequence of partitions by letting $\Aut(G)_{\Pi_0}$ act on the elements of the cells; since this sequence is defined by a sequence $v_1, ..., v_k$, the sequence $\gamma(v_1), ..., \gamma(v_k)$ defines the image under $\gamma \in \Aut(G)_{\Pi_0}$.  Further, the subgroup of permutations respecting a partition $\Pi$ acts on any subtree of $T$ rooted at $\Pi$.  Because this action is faithful, the structure of $T$ can be used to calculate a set of generators for $\Aut(G)_{\Pi_0}$.
 
 The algorithm itself is a backtrack algorithm that successively refines the partitions to explore the tree $T$.  Any leaf of the tree is a discrete ordered partition $(\{v_{1}\}, \{v_{2}\}, ..., \{v_{n}\})$, which defines an ordering of the vertices of $G$.  Given two leaves of the tree $T$, one has two orderings $v_1, ..., v_n$ and $v_1^\prime, ..., v_n^\prime$, which we think of as a permutation defined by $v_i \mapsto v_i^\prime$.  This is the means by which the algorithm finds automorphisms, and it uses the presence of automorphisms to deduce when different parts of the tree are equivalent.  At this point the algorithm backtracks towards the root until there is a new part of the tree to explore which is not yet known to be equivalent to a part of the tree already traversed.  In practice, the part of the tree traversed is much smaller than the entire tree, and once one backtracks off of the root, one has a set of generators for $\Aut(G)_{\Pi_0}$.  Invariants and orderings can also be used to find a leaf of the tree $T$ which is maximal amongst the nodes with largest invariants, which is uniquely defined independently of reordering the inputs.  This leaf is a ``canonical label'' for the pair $(G, \Pi_0)$.  In particular, if $\gamma \in \Aut(G)_{\Pi_0}$, then the canonical label for $(G^\gamma, \Pi_0^\gamma)$ is the same as that of $(G, \Pi_0)$.  For details, the Reader is once again directed to the theory in Ref.\cite{pgi} and the code in Ref.\cite{sage}.
 
 The outcome is an algorithm to efficiently compute the group $\Aut(C) = \Aut(G(C))_b$ (here ``='' means canonically isomorphic) as well as a unique representative of each isomorphism class which we will denote $c(C)$, both of which will be used in the algorithm to generate all the permutation classes of doubly even codes.  Also note that by the use of the bipartite graph construction, not only is $c(C)$ a binary code, but it also contains an ordering of its words.  In practice, $c(C)$ will be output as an actual generator matrix.

\subsection{New Software for Special Circumstances}
\label{new_soft}
 It is a curious coincidence that the limit on $N$ for this application is 32.  This implies that the size of the words in a code are at most the size of the machine words on a 32-bit system.  The analogy goes further: the operation of adding two vectors in a code becomes the single clock tick of taking an \verb|XOR| on the machine words representing them.

 When analyzing codes in terms of their corresponding graphs, we can take advantage of the linear orthogonal structure of the codes to optimize standard graph algorithms.
For example, given an equitable partition $\Pi$ of the bipartite graph $G(C)$, the set of words appearing in their own singleton cells of $\Pi$ is closed under binary addition.  This is immediate from the definition of equitable partition and of binary addition, and this fact greatly reduces the expected height of the tree to be searched in computing $c(C)$.  Further, every permutation acts linearly, so to check if some permutation is indeed an automorphism, we need only check it on a basis.  It is also possible to use the linear structure to derive variants on the refinement procedure, depending on the size and type of binary codes being generated.
 
 In the generation algorithm, since we are interested only in self-orthogonal codes, we can use more linear algebra to efficiently enumerate the possible children of a code. We start by writing the code in standard form, $C = [I_k|C^\prime]$, and then extend the basis given by the rows to a basis for $C^\perp$.  This basis can be taken so that the first $k$ positions of each additional vector are zero, and we can also shuffle the pivot columns to the front, so that we obtain this basis as the rows of a matrix of the form $$\left[\begin{array}{c|c}I_k & C^{\prime\prime} \\\hline0 & \begin{array}{c|c}I_{N-2k} & *\end{array}\end{array}\right],$$ where $C^{\prime\prime}$ is a rearrangement of the columns of $C^\prime$.  The rowspan of the bottom part of the matrix, $\left[\begin{array}{ccc}0 & I_{N-2k} & *\end{array}\right]$, then forms a set of unique coset representatives of the original code.  Furthermore, (\ref{eqn:inclusionexclusion}) implies the following:
\begin{corollary} Suppose $C$ is a binary code spanned by $\{v_1,\cdots,v_k\}$, and that $\wt(v_i) \equiv 0 \pmod 4$ for each $i = 1,\dots,k$. If either of the following hold for all $i \neq j$, then $C$ is doubly even:
\begin{itemize}
\item[a\em)] $\langle v_i, v_j \rangle = 0$,
\item[b\em)] $\wt(v_i + v_j) \equiv 0 \pmod 4$.
\end{itemize}
\end{corollary}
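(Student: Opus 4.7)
The plan is to reduce (b) to (a) and then induct on the length of the additive expression defining a codeword. The reduction uses the identity \eqref{eqn:inclusionexclusion}, which the paper has already exploited to observe that whenever $\wt(v)$ and $\wt(w)$ are multiples of $4$, the difference $\wt(v\codexor w) - \wt(v) - \wt(w) = -2\,\wt(v\codeand w)$ is a multiple of $4$ precisely when $\wt(v\codeand w)$ is even, \ie when $\langle v,w\rangle \equiv 0\pmod 2$. Applied to any pair $v_i,v_j$ (both of weight a multiple of $4$ by hypothesis), this shows that $\wt(v_i\codexor v_j)\equiv 0\pmod 4$ if and only if $\langle v_i,v_j\rangle = 0$, so (a) and (b) are equivalent under the standing assumption, and it suffices to prove the corollary under hypothesis (a).

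Assuming (a), the plan is to prove, by induction on $\ell$, that every sum of the form $w_\ell := v_{i_1}\codexor\cdots\codexor v_{i_\ell}$ of distinct generators has weight divisible by $4$. The base cases $\ell=0,1$ are immediate from the hypothesis. For the inductive step, write $w_\ell = w_{\ell-1}\codexor v_{i_\ell}$ and apply \eqref{eqn:inclusionexclusion}:
\begin{equation}
 \wt(w_\ell) = \wt(w_{\ell-1}) + \wt(v_{i_\ell}) - 2\,\wt(w_{\ell-1}\codeand v_{i_\ell}).
\end{equation}
Both $\wt(w_{\ell-1})$ (by the inductive hypothesis) and $\wt(v_{i_\ell})$ (by the standing hypothesis) are multiples of $4$, so $\wt(w_\ell)\equiv 0\pmod 4$ exactly when $\wt(w_{\ell-1}\codeand v_{i_\ell})$ is even, \ie when $\langle w_{\ell-1}, v_{i_\ell}\rangle = 0$.

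The inner product is bilinear over $\ZZ_2$, so
\begin{equation}
 \langle w_{\ell-1}, v_{i_\ell}\rangle
   = \sum_{j=1}^{\ell-1}\langle v_{i_j}, v_{i_\ell}\rangle \pmod 2,
\end{equation}
and under hypothesis (a) each summand vanishes. Hence $\langle w_{\ell-1},v_{i_\ell}\rangle = 0$, and the inductive step closes. Since every codeword of $C$ arises as such a sum of distinct generators, this proves that $\wt(w)\equiv 0\pmod 4$ for every $w\in C$, \ie that $C$ is doubly even.

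The only delicate point is the bookkeeping that lets us invoke \eqref{eqn:inclusionexclusion} cleanly in the inductive step—namely recognizing that the inner product $\langle w_{\ell-1},v_{i_\ell}\rangle$ of the partial sum with the next generator is controlled by the pairwise inner products of the generators themselves. Once the equivalence of (a) and (b) is established, this is the only nontrivial content, and it is essentially bilinearity, so there is no serious obstacle.
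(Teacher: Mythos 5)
Your proof is correct, and it follows the same overall skeleton as the paper's: first observe via the identity $\wt(v \codexor w)=\wt(v)+\wt(w)-2\,\wt(v\codeand w)$ that hypotheses \emph{a}) and \emph{b}) are equivalent for vectors of weight divisible by $4$, then induct on the number of generators appearing in a codeword. The difference lies in how the inductive step is closed. The paper works under hypothesis \emph{b}) and proves a three-term closure lemma: if $x,y,z$ and all their pairwise sums have weight divisible by $4$, then so does $x\codexor y\codexor z$, by chaining the weight identity three times; the induction on the number of basis elements is then asserted rather tersely (and implicitly requires a strong inductive hypothesis so that the pairwise-sum conditions are available at each stage). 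You instead work under hypothesis \emph{a}) and observe that the $\mathbb{Z}_2$-inner product is bilinear, so the partial sum $w_{\ell-1}$ is automatically orthogonal to the next generator $v_{i_\ell}$, which makes the inductive step a one-line consequence of $\langle v,w\rangle\equiv\wt(v\codeand w)\pmod 2$. Your route is arguably the cleaner of the two: it avoids the three-term computation entirely and makes the structure of the induction fully explicit, whereas the paper's version keeps everything phrased in terms of weights at the cost of a longer calculation and a more delicate (and underexplained) inductive bookkeeping.
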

\begin{proof} These results are special to binary codes, since every vector in $C$ is the sum of distinct vectors in $\{v_1,\cdots,v_k\}$. Equation (\ref{eqn:inclusionexclusion}) implies that both conditions are equivalent, so we use {\em b}). Suppose $x,y,z \in C$ are such that
\[\wt(x) \equiv \wt(y) \equiv \wt(z) \equiv \wt(x+y) \equiv \wt(x+z) \equiv \wt(y+z) \equiv 0 \pmod 4.\]
Then again by (\ref{eqn:inclusionexclusion}),
\begin{align*}
\wt(x + y + z) \equiv&\ \wt(x + y) + \wt(z) - 2\langle x+y, z \rangle\\
\equiv& - 2\langle x, z \rangle - 2\langle y, z \rangle\\
\equiv& - 2\wt(x) - 2\wt(z) + 2\wt(x+z)\\
& - 2\wt(y) - 2\wt(z) + 2\wt(y+z)\\
\equiv&\ 0 \pmod 4.
\end{align*}
By induction on the number of basis elements in a linear combination, $C$ is doubly even.
\end{proof}
Going back to the situation at hand, this means that we need only examine the doubly even vectors in the rowspan of $\left[\begin{array}{ccc}0 & I_{N-2k} & *\end{array}\right]$, since the code we already have is doubly even and self-orthogonal, and any vector to be considered is already orthogonal to the code.  These vectors are in one-to-one correspondence with the set of doubly even codes of one dimension higher containing (our rearrangement of) $C$ .  We can do even better by starting with the even subcode of $\left[\begin{array}{ccc}0 & I_{N-2k} & *\end{array}\right]$, since the sum of two even words is even.

\subsection{Exhaustively Generating the Codes}
 Here we use another algorithm developed by McKay\cite{orderly}, called canonical augmentation.  The ingredients for this are a question like the one we are considering, a canonical labeling function as described, a way of computing the automorphism group of a code, and a hereditary structure on the objects to be generated.
 
 We obtain a hereditary structure by thinking of a certain code's \em children \em as the set of codes constructible by adding a single word not already in the code. Thus the dimension of a code is one less than the dimension of all its children, and every code can be built up from the zero dimensional code by a limited number of augmentations.  In our application, this puts all the desired codes on a tree of height 16.  At this point one can consider the very naive algorithm of generating all the children for each code, keeping a list and throwing out isomorphs.  However, the obvious problem with this approach is the excessive isomorphism computation, which is expensive.  This problem is not really ameliorated by storing the canonical representative of each code, as the canonical representative calculation is almost as expensive as isomorphism testing.

 The idea behind canonical augmentation is that instead of requiring the objects generated be in canonical form, we require simply that the augmentation itself be canonical.  The definition of an \em augmentation \em is simply an ordered pair $(C, C^\prime)$, where $C^\prime$ is a child of $C$.  An isomorphism of augmentations is a permutation $\gamma$ such that $(C^\gamma,{C^\prime}^\gamma) = (D,D^\prime)$.  For example, $(C, C^\prime) \cong (C, C^{\prime\prime})$ implies that there is a $\gamma \in \Aut(C)$ such that ${C^\prime}^\gamma = C^{\prime\prime}$.
 
 Suppose we have a function $p$ which takes codes $C$ of dimension $k > 0$ to codes $p(C)$ of dimension $k-1$, such that $C$ is a child of $p(C)$.  Suppose further that $p$ satisfies the following property: if $C \cong D$, then $(p(C), C) \cong (p(D), D)$.  If we have such a function, we can define that an augmentation $(C, D)$ is canonical if $(C, D) \cong (p(D), D)$.  Now suppose that we have two augmentations $(C, C^\prime)$ and $(D, D^\prime)$ such that $C^\prime \cong D^\prime$.  If they were both canonical augmentations then we would have
 $$(C, C^\prime) \cong \bigl(p(C^\prime), C^\prime\bigr) \cong \bigl(p(D^\prime), D^\prime\bigr) \cong (D, D^\prime).$$
  In other words, if both augmentations are canonical, then $C^\prime \cong D^\prime$ implies that $C \cong D$.  Thus any repeated isomorphs would have to be due to the parents being repeated.  If we assume that there are no isomorphs on the parent level, then this implies that $C = D$, and $(C, C^\prime) \cong (C, D^\prime)$.  In other words, under the framework of canonical augmentation, repeated isomorphs arise only due to automorphisms of the parents.  Thus we have Algorithm \ref{alg_gen} below.
 \begin{algorithm}
 \caption{Generate all doubly even binary codes of degree 32, isomorph-free.}
 \label{alg_gen}
 \begin{verbatim}

 C_0 := the dimesion zero code, of degree 32
 traverse(C_0)
 procedure traverse(C):
     report C
     children := the children of C
     representatives := {}
     for D in children:
         if D is minimal in its orbit under Aut(C):
             add D to representatives
     for D in representatives:
         if (C, D) is a canonical augmentation:
             traverse(D)
\end{verbatim}\end{algorithm}

The only remaining question is to produce such a function $p$.  This is where we use the canonical label defined in the last section.  If $C$ is a code of dimension $k$, let $\gamma$ be the permutation taking $C$ to $c(C)$.  Recall that $c(C)$ came with a particular generator matrix determined by the ordering of the words of the code.  Removing the last row of that generator matrix gives a code $C^\prime$ of the same dimension as $C$ and applying $\gamma^{-1}$ to that, we arrive at $p(C)$.  Suppose that $C \cong D$, which in particular implies that $c(C) = c(D)$.  Let $\gamma_C, \gamma_D$ be the permutations taking $C,D$ to $c(C), c(D)$, respectively.  These are isomorphisms, so in fact, $\gamma_D^{-1}\circ\gamma_C$ is an isomorphism from $C$ to $D$ taking $p(C)$ to $p(D)$ by the definition of $p$.  This proves the property that $(p(C),C) \cong (p(D),D)$.  Further, since we have already done the computation $c(C)$, we can obtain not only the canonical label but also generators for the automorphism group $\Aut(C)$ for free.  Thus when we are checking whether $(C^\prime, C) \cong (p(C), C)$, we can simply look for an element of $\Aut(C)$ that takes $C^\prime$ to $p(C)$.

 The final task of enumerating the codes for storage to disk will approximate the experience of a harvest.  Many separate worker processes will be working in parallel, each examining the pairs $(C, D)$ as in the second to last line of Algorithm \ref{alg_gen}, for a fixed $C$ and many $D$.  Each worker will receive one code at a time, and perform all the steps in Algorithm \ref{alg_gen}, only instead of recursively calling the function on the last line, they will record the augmented codes that succeed in a table, which will each in turn become the fodder for another worker.  As codes are given to workers to augment on, they will be flagged as searched, and once all the codes have been flagged and all the workers are done, all the desired codes will be in the table.  The intended end result of all this is a searchable online database which categorizes the codes by $N$ and $k$,  automorphism group size, weight distribution, and perhaps other parameters.

\vfill\clearpage
\bibliographystyle{elsart-numX}
\bibliography{Refs}
\end{document}

\begin{align}
 \sigma(N,k) &= \prod_{i=0}^{k-1}\left(\frac{2^{N-2i-2}+2^{\left\lfloor\frac{N}{2}\right\rfloor-i-1}-1}{2^{i+1} - 1}\right)
      & \text{if }  N &\equiv 1,7 \pmod 8, \\
 \sigma(N,k) &= \prod_{i=0}^{k-1}\frac{\bigl(2^{\frac{N}{2}-i-1}+1\bigr)\bigl(2^{\frac{N}{2}-i-1}-1\bigr)}{2^{i+1} - 1}
      & \text{if }  N &\equiv 2,6 \pmod 8, \\
 \sigma(N,k) &= \prod_{i=0}^{k-1}\left(\frac{2^{N-2i-2}-2^{\left\lfloor\frac{N}{2}\right\rfloor-i-1}-1}{2^{i+1} - 1}\right)
      &\text{if }  N &\equiv 3,5 \pmod 8, \\
 \sigma(N,k) &= \left[\prod_{i=0}^{k-2}\frac{2^{N-2i-2}+2^{\frac{N}{2}-i-1}-2}{2^{i+1} - 1}\right]{\cdot}
               \left[\frac{1}{2^{k-1}} + \frac{2^{N-2k} + 2^{\frac{N}{2} - k} - 2}{2^k - 1}\right]
      &\text{if }  N &\equiv 0 \pmod 8, \\
 \sigma(N,k) &= \left[\prod_{i=0}^{k-2}\frac{2^{N-2i-2}+2^{\frac{N}{2}-i-1}-2}{2^{i+1} - 1}\right]{\cdot}
               \left[\frac{1}{2^{k-1}} + \frac{2^{N-2k} - 2^{\frac{N}{2} - k} - 2}{2^k - 1}\right]
      &\text{if }  N &\equiv 4 \pmod 8.
 \end{align}